\newcommand{\ipath}{\pi}
\newcommand{\schr}{\mu}
\newcommand{\dist}{\nu}
\newcommand{\vect}[1]{\boldsymbol{#1}}
\newcommand{\dfa}{\mathcal{A}}
\newcommand{\mdp}{\mathcal{M}}
\newcommand{\tlab}{\mathsf{ct}}
\newcommand{\tmdp}{\mdp^\tlab }
\DeclareMathOperator*{\argmax}{arg\,max}
\DeclareMathOperator*{\argmin}{arg\,min}
\DeclareMathOperator*{\utl}{\mathsf{U}} 
\DeclareMathOperator*{\nxt}{\mathsf{X}} 
\DeclareMathOperator*{\evt}{\mathsf{F}} 
\DeclareMathOperator*{\alw}{\mathsf{G}} 
\newcommand{\parahead}[1]{\medskip \noindent\textbf{#1.}}
\newcommand*{\TickSize}{2pt}%
\providecommand*{\cupdot}{%
  \mathbin{%
    \mathpalette\@cupdot{}%
  }%
}
\newcommand*{\@cupdot}[2]{%
  \ooalign{%
    $\m@th#1\cup$\cr
    \sbox0{$#1\cup$}%
    \dimen@=\ht0 %
    \sbox0{$\m@th#1\cdot$}%
    \advance\dimen@ by -\ht0 %
    \dimen@=.5\dimen@
    \hidewidth\raise\dimen@\box0\hidewidth
  }%
}
\providecommand*{\bigcupdot}{%
  \mathop{%
    \vphantom{\bigcup}%
    \mathpalette\@bigcupdot{}%
  }%
}
\newcommand*{\@bigcupdot}[2]{%
  \ooalign{%
    $\m@th#1\bigcup$\cr
    \sbox0{$#1\bigcup$}%
    \dimen@=\ht0 %
    \advance\dimen@ by -\dp0 %
    \sbox0{\scalebox{1.5}{$\m@th#1\cdot$}}%
    \advance\dimen@ by -\ht0 %
    \dimen@=.5\dimen@
    \hidewidth\raise\dimen@\box0\hidewidth
  }%
}
\begin{document}
\title{Multi-Objective Task Assignment and Multiagent Planning with Hybrid GPU-CPU Acceleration\thanks{This work was partially supported by Australian Defence Science and Technology Group under the Artificial Intelligence for Decision Making 2022 Initiative scheme.}}
%
%
\author{Thomas Robinson\orcidID{0000-0002-6150-2587} \and
Guoxin Su\orcidID{0000-0002-2087-4894}} 
%
\authorrunning{T. Robinson, G. Su}
\titlerunning{Multi-Objective Task Assignment and Multiagent Planning}
%
%
\institute{University of Wollongong, NSW 2522, Australia \\
\email{tmr463@uowmail.edu.au}\quad
\email{guoxin@uow.edu.au}}
%
\maketitle  
%
\begin{abstract}
Allocation and planning with a collection of tasks and a group of agents is an important problem in multiagent systems.
One commonly faced bottleneck is scalability, as in general the multiagent model increases exponentially in size with the number of agents.
We consider the combination of random task assignment and multiagent planning under multiple-objective constraints, and show that this problem can be decentralised to individual agent-task models.
We present an algorithm of point-oriented Pareto computation, which checks whether a point corresponding to given cost and probability thresholds for our formal problem is feasible or not. If the given point is infeasible, our algorithm finds a Pareto-optimal point which is closest to the given point. We provide the first multi-objective model checking framework that simultaneously uses GPU and multi-core acceleration. Our framework manages CPU and GPU devices as a load balancing problem for parallel computation. Our experiments demonstrate that parallelisation achieves significant run time speed-up over sequential computation.

\keywords{Multiagent System \and Task Assignment \and Planning \and Probabilistic Model Checking \and GPU and Multi-Core Acceleration}
\end{abstract}
\section{Introduction}
Markov Decision Processes (MDPs) \cite{puterman2014markov} are a fundamental model for multiagent planning in stochastic environments, where actions of an agent at a state may lead to uncertain outcomes.
Multiagent task allocation and planning is concerned with enabling a group of agents to divide up tasks amongst themselves and carry out their planning and execution. 
Scalability is a commonly faced bottleneck for this kind of problems, as in general an MDP that models a multiagent system (MAS) increases exponentially in size with a linear increment in the number of agents in the system \cite{boutilier1996planning}. 


Probabilistic model checking (PMC) is a verification technique to establish rigorous guarantees about the correctness of real-life stochastic systems \cite{baier201910}. 
PMC provides methods to compute the optimal values of reachability rewards for an MDP, and the optimal probabilities that an MDP satisfies properties formalised with Linear Temporal Logic (LTL).
A fragment of LTL, called co-safe LTL, has a deterministic finite-state automaton (DFA) representation \cite{KupfermanOrna2001MCoS}, and thus is suitable to specify tasks that must be completed in finite time.
Task execution in finite time is important in multiagent planning because we typically want to re-use the agents to execute further tasks.  
 
In practice, coordination of agents usually involves conflicting solutions to the multiple objectives that an MAS is required to satisfy, for example, agents may need to balance execution time with energy consumption. 
When simultaneous verification of multiple objectives is concerned, we require the multi-objective MDP (MOMDP) \cite{roijers2013survey}
whose reward structure specifies reward vectors (rather than scalars). 
The solution space of an MOMDP is a convex polytope \cite{forejt2012pareto,etessami2007multi}, which makes the MOMDP model checking problem tractable. 
Currently three kinds of queries are considered in MOMDP model checking \cite{forejt_automatic_2011}: 
The achievability query is the most basic query, which asks whether there exists a scheduler to meet all objective thresholds; the numerical query is a numerical variant of the first query, which computes the optimal value of one objective while meeting all other objective thresholds; the Pareto query is the most expensive query, which computes approximately the Pareto curve of all objectives.

The classical assignment problem finds an assignment, namely a one-to-one mapping from tasks to agents, which results in a maximal assignment reward.
The multi-objective assignment problem is to determine an assignment such that the vectorised assignment reward is Pareto optimal.
The classical assignment problem can be solved efficiently (e.g., using the Hungarian algorithm \cite{kuhn1955hungarian}), but the multi-objective assignment problem is much harder \cite{Ulungu1994}.
The multi-objective random assignment (MORA) problem pursues a randomised distribution over assignments such that the expected assignment reward is Pareto optimal.

The combination of (single-objective) task assignment and multiagent planning has been considered for non-stochastic agent models (i.e., transition systems) \cite{schillinger_simultaneous_2018} and stochastic agent models 
(i.e., MDPs) \cite{faruq_simultaneous_2018}.
In this paper, we extend MOMDP model checking to a setting of multi-objective random assignment and planning (MORAP) in an MAS, and present a novel implementation with hybrid GPU-CPU acceleration. Our main contributions are as follows:
\begin{itemize}
\vspace{-1mm}
    \item 
    We show the convexity of our formal problem (MORAP), and that a practical approach to solve this problem can rely on a decentralised model, which avoids the exponential model size growth with agent-task numbers.
    \item Our main algorithm is a new point-oriented Pareto computation complementing the existing achievability and Pareto queries \cite{forejt2012pareto}. 
    For a given point corresponding to cost and probability thresholds, our algorithm finds a point which is feasible for the MORAP problem and closest to the given point under a general vector norm.
    \item 
    To the best of our knowledge, we provide the first multi-objective model checking framework that utilises simultaneous GPU and multi-core acceleration.
    Our framework manages CPU and GPU devices as a load balancing problem for parallel computation. We evaluate the performance of our implementation in a smart-warehouse example.
\end{itemize}

The remainder of this paper is organised as follows: Section \ref{sec:preliminaries} provides the preliminaries for the problem; Section \ref{sec:approach} gives the approach to the problem, model construction and algorithms; Section \ref{sec:implementation} provides details on the hybrid implementation and parallel architecture; Section \ref{sec:experiments} analyses the performance of our approach; Section \ref{sec:related} provides related work; and finally Section \ref{sec:conclusion} concludes the paper.
Formal proofs of theorems are included in the appendix of the long version of this paper \cite{long_version}. 

\section{Preliminaries}\label{sec:preliminaries}
\parahead{Deterministic Finite Automata}
A \emph{deterministic finite automaton} (DFA) $\dfa $ is given by the tuple $(Q, q_0, Q_F, \Sigma, \delta)$ where (i) $Q$ is a set of locations,
(ii) $q_0\in Q$ is an initial location, (iii) $Q_F\subseteq Q$ is a set of accepting locations, (iv) $\Sigma = 2^{AP}$ (where ${AP}$ is a non-empty set of atomic propositions) is the alphabet, and (v) $\delta: S\times \Sigma \to S$ is the transition function.
If $\delta(q,W)= q'$ for some $W\subseteq AP$, we call $q$ a \emph{predecessor} of $q'$ and $q'$ a \emph{successor} of $q$. Let $\mathrm{pre}(q)$ and $\mathrm{suc}(q)$ denote the set of predecessors or successors of $q$, respectively.
A location $q$ is a \emph{sink} if $\mathrm{suc}(q)=\{q\}$. 
In this paper, it suffices to consider DFAs {whose accepting locations are sinks}.
A run in $\mathcal{A}$ is a sequence of locations $q_1,\ldots, q_m$ such that $q_{i+1}\in \mathrm{suc}(q_i)$ for all $1\leq i\leq m-1$. 
We call $q$ a \emph{trap} if there is no run to any $q'\in Q_F$ from it. Let $Q_R$ be the set of traps in $\mathcal{A}$.
%

\parahead{Co-Safe LTL}
LTL is a compact representation of linear time properties. The syntax of LTL is $\varphi ::= \top \mid \mathtt{a} \mid \neg \varphi \mid \varphi \land \varphi \mid \nxt\varphi \mid \varphi \utl \varphi$, 
where $\mathtt{a} \in AP$. The operators $\nxt$ and $\utl$ stand for ``next" and ``until'', respectively. Let  $\evt\varphi: = \top \utl \varphi $, and $ \alw\varphi:=\neg \evt \neg \varphi$.
The semantic relationship $\sigma\models \varphi$ where $\sigma\in \Sigma^\omega $ is standard
where $\Sigma^\omega$ denotes the set of all infinite words over $\Sigma$.  
We are interested in the \emph{co-safe} fragment of LTL formulas.
Informally, $\varphi$ is {co-safe} if any $\sigma$  such that $\sigma\models \varphi$ includes some \emph{good prefix} (which is accepting in some DFA) 
$pref_{good}(\varphi)$ denoted ${acc}(\dfa)$.
Syntactically, any LTL formula containing only the temporal operators $\nxt$ (\emph{next}), $\utl$ (\emph{until}), and $\evt$ (\emph{eventually}) in \emph{positive normal form} (PNF) is co-safe.
A formal characterisation in the semantic level is included the appendix of \cite{long_version}.

\parahead{Markov Decision Process}
    A (labelled) MDP is given by the tuple 
    $\mathcal{M}= (S, s_0, A, {P},L)$ where (i) $S$ is a finite nonempty state space, (ii) $s_0\in S$ is an initial state, (iii) $A$ is a set of actions, (iv) ${P} : S\times A\times S \to [0,1]$ is a transition probability function such that $\sum_{s'\in S} {P}(s,a,s') \in \{0,1\}$, and (v) $L: S\to \Sigma$ is a labelling function. 
    Let $A(s) = \{a \in A \mid \sum_{s'\in S} {P}(s,a,s')=1\}$, i.e., $A(s)$ is the set of \emph{enabled} actions at $s$.  
    The size of $\mdp$ is $|\mdp|= |S|+|P|$, where $|P|=|\{(s,a,s')\in S\times A\times S\mid P(s,q,s')>0\}|$.
    A \emph{reward function or structure} for $\mathcal{M}$ is a function $\rho: \{(s,a) \in S\times A \mid a\in A(s)\}\to \mathbb{R}$.
We write $\mathcal{M}[\rho]$ to explicitly indicate the reward structure $\rho$ for $\mathcal{M}$.
%
An (infinite) \emph{path} $\ipath$ is a sequence $s_1a_1s_2a_2\ldots$ such that ${P}(s_{i},a_i,s_{i+1})>0$ for all $i\geq 1$.
Let $L(\pi)$ denote the word $L(s_1)L(s_2)\ldots \in \Sigma^\omega$.
Let $\mathrm{IPath}$ be the set of paths in $\mathcal{M}$ and $\mathrm{IPath}(s)$ be the subset of $\mathrm{IPath}$ containing the paths originating from $s$.
The set of probability distributions over $A$ is denoted by $Dist(A)$.
A memoryless \emph{scheduler} (or scheduler for short) 
for $\mathcal{M}$, is a mapping $\schr: s\mapsto Dist(A(s))$ for all $s\in S$. 
If $\schr$ is a \emph{simple} (or pure) if $\schr(s)(a)=1$ for each $s\in S$ and some $a\in A(s)$.
The set of schedulers (resp., simple schedulers) is denoted by $Sch(\mathcal{M})$ (resp., $Sch_\mathrm{S}(\mathcal{M})$).
%

\parahead{Reachability Reward}
Given any LTL formula $\evt\!B$ with $B$ being a Boolean formula, let ${\rho}(\ipath|\evt\!B)= \sum_{i=1}^n \rho(s_i,a_i)$ where $\ipath = s_1a_1s_2a_2\ldots\in \mathrm{IPath}(s_1)$ and $n$ is the smallest number such that $L(s_n)\models B$ and $L(s_i)\not\models B$ for all $i <n$; if such $n$ does not exist, let $\rho(\ipath|\evt\!B)=\infty$. 
Let $\mathbf{Pr}^{\mathcal{M},\schr}$ be the probability measure over paths in $\mathrm{IPath}(s)$.\footnote{This probability measure is defined on the discrete-time Markov chain induced by the scheduler $\schr$ of $\mdp$ (c.f.\ Definition 10.92 in \cite{baier_principles_2008}).}
The expectation $\mathbf{E}^{\mathcal{M}[\rho],\schr}(\evt B) \doteq \int_{\ipath} \rho(\ipath|\evt\!B) \mathrm{d}\mathbf{Pr}^{\mathcal{M},\schr}$, a.k.a.\ \emph{reachability reward} \cite{kwiatkowska2022probabilistic}, is the {expected reward accumulated in a path of $\mathcal{M}$ under $\schr$ until reaching states satisfying $B$.} 
We say $\mdp[\rho]$ is \emph{reward-finite} w.r.t.\ $\evt\!B$ if $ \sup_{\schr\in {Sch}(\mdp)} \mathbf{E}^{\mathcal{M}[\rho],\schr}(\evt B)<\infty$. 

\parahead{Product MDP}
Given $\mathcal{M}= (S, s_0, A, {P}, L)$ and $\dfa = (Q,q_0, Q_F, \Sigma, \delta)$, a \emph{product MDP} is a tuple $\mathcal{M} \otimes \dfa =(S\times Q, (s_0, q_0), A, {P}', L')$
    where (i) $P':S\times Q\times {A} \times S\times Q \to [0,1]$ is a transition probability function such that
	\begin{equation*}
    	{P}'( s, q, a, s', q' ) = \left\{ 
    	\begin{array}{ll}
    	{P}(s, a, s') & \quad \text{if } q' = \delta(q, L(s'))\\
    	0 & \quad \text{otherwise}
    	\end{array}\right.
	\end{equation*}
    and (ii) $L':S\times Q\to 2^\Sigma$  
    is a labelling function s.t.\ $L'(s,q) = L(s)$.
Let $\mdp[\rho]\otimes \dfa$ refer to $(\mdp\otimes \dfa)[\rho]$ where $\rho(s,q,a) = \rho(s,a)$ for all $(s,q)\in S\times Q, a\in A(s)$.

%
\parahead{Geometry}
For a point (i.e., vector) $\vect{v}
\in \mathbb{R}^n$  for some $n$, let $v_i$ denote the $i^\mathrm{th}$ element of $\vect{v}$.
A \emph{weight vector} $\vect{w}$ is a vector such that $w_i\geq 0$ and $\sum_{i=1}^n w_i=1$.
The \emph{dot product} of $\vect{v}$ and $\vect{u}$, denoted $\vect{v}\cdot \vect{u}$, is the sum $\sum_{i=1}^n v_iu_i$.
For a set $\Phi=\{\vect{v}_1, \ldots, \vect{v}_m\} \subseteq \mathbb{R}^n$,
a \emph{convex combination} in $\Phi$ is $\sum_{i=1}^m w_i \vect{v}_i$ for some weight vector $\vect{w}\in \mathbb{R}^m$.
The \emph{downward closure} of the \emph{convex hull} of $\Phi$, denoted ${down}(\Phi)$, is the set of vectors such that for any $\vect{u}\in {down}(\Phi)$ there is a convex combination $\vect{v} = w_1\vect{v}_1+\ldots +w_m\vect{v}_m$ such that 
$u_i \leq v_i$.
Let $\Psi\subseteq \mathbb{R}^n$ be any downward closure of points. 
We say $\vect{u}$ \emph{dominates $\vect{v}$ from above}, denoted $\vect{v}\leq \vect{u}$, if $v_i\leq u_i$ for all $1\leq i\leq n$. 
A vector $\vect{u}\in \Psi$ is \emph{Pareto optimal} if 
$\vect{u}$ is no point in $\Psi$ dominates it from above.
A \emph{Pareto curve} in $\Psi$ is the set of Pareto optimal vectors in $\Psi$. 
The following lemma follows from the \emph{separating hyperplane} and \emph{supporting hyperplane theorems}.
\begin{lemma}[\cite{boyd2004convex}]\label{lem:hyperplane}
    Let $\Psi\subseteq \mathbb{R}^n$ be any downward closure of a convex hull constructed from a set of points $\vect{x} \in \mathbb{R}^{n}$. For any $\vect{v}\not\in \Psi$, there is a weight vector $\vect{w}$ such that $\vect{w}\cdot\vect{v}> \vect{w}\cdot\vect{x}$ for all $\vect{x}\in \Psi$. We say that $\vect{w}$ \emph{separates} $\vect{v}$ from $\Psi$.
    Also, for any $\vect{u}$ on the Pareto curve of $\Psi$, there is a weight vector $\vect{w}'$ such that $\vect{w}'\cdot\vect{u}\geq \vect{w}'\cdot\vect{x}$ for all $\vect{x}\in \Psi$. We say that $\{\vect{x}\in\mathbb{R}^n\mid \vect{w}'\cdot\vect{x}=\vect{w}'\cdot\vect{u}\}$ is a \emph{supporting hyperplane} of $\Psi$.
\end{lemma}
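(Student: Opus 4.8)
The plan is to derive both statements from the standard separating and supporting hyperplane theorems (the cited Boyd--Vandenberghe results), with the only genuine extra work being to show that the hyperplane normals can be taken to be \emph{weight vectors}, i.e.\ nonnegative and normalised. First I would record two structural facts about $\Psi = down(\Phi)$: since $\Phi$ is a finite set of points, its convex hull is a compact polytope, and $\Psi$ is the Minkowski sum of that polytope with the closed negative orthant, so $\Psi$ is closed and convex. Convexity is what lets us invoke the hyperplane theorems, while closedness is what upgrades separation to \emph{strict} separation in the first part.

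For the separating statement, since $\vect{v}\notin\Psi$ and $\Psi$ is closed and convex, the strict separating hyperplane theorem yields a nonzero $\vect{a}\in\mathbb{R}^n$ and a bound $b$ with $\vect{a}\cdot\vect{v}>b\geq\vect{a}\cdot\vect{x}$ for all $\vect{x}\in\Psi$. The key step is to argue that $\vect{a}\geq\vect{0}$ componentwise, and here I would exploit that $\Psi$ is downward closed: if some coordinate $a_i<0$, then for any $\vect{x}\in\Psi$ and every $t\geq 0$ the point $\vect{x}-t\vect{e}_i$ still lies in $\Psi$, yet $\vect{a}\cdot(\vect{x}-t\vect{e}_i)=\vect{a}\cdot\vect{x}+t\lvert a_i\rvert\to\infty$, contradicting the uniform bound $b$. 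Hence every $a_i\geq 0$, and since $\vect{a}\neq\vect{0}$ we have $\sum_i a_i>0$; rescaling $\vect{w}=\vect{a}/\sum_i a_i$ produces a weight vector that preserves the strict inequality $\vect{w}\cdot\vect{v}>\vect{w}\cdot\vect{x}$.

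For the supporting statement, I would first observe that $\vect{u}$ must be a boundary point of $\Psi$: were it interior, then $\vect{u}+\varepsilon\vect{1}\in\Psi$ for small $\varepsilon>0$ would dominate $\vect{u}$ from above, contradicting Pareto optimality. The supporting hyperplane theorem then supplies a nonzero $\vect{a}'$ with $\vect{a}'\cdot\vect{u}\geq\vect{a}'\cdot\vect{x}$ for all $\vect{x}\in\Psi$. The same downward-closure perturbation argument, applied at $\vect{u}$ and now using that $\vect{u}$ attains the maximum of the functional, forces $\vect{a}'\geq\vect{0}$, and normalising yields the required weight vector $\vect{w}'$; the set $\{\vect{x}\in\mathbb{R}^n\mid \vect{w}'\cdot\vect{x}=\vect{w}'\cdot\vect{u}\}$ is then the claimed supporting hyperplane.

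The main obstacle is precisely this nonnegativity of the normals: the off-the-shelf theorems only guarantee a nonzero separating or supporting direction, and it is the downward-closed geometry of $\Psi$ that rules out negative components and thereby turns that direction into a genuine weight vector. A minor but necessary point to handle carefully is the closedness of $\Psi$, which is what makes strict separation available in the first part.
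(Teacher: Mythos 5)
Your proof is correct and takes precisely the route the paper intends: the paper states this lemma without proof, citing the separating and supporting hyperplane theorems of \cite{boyd2004convex}, and your argument supplies exactly the details that citation leaves implicit---closedness and convexity of $\Psi$ as a Minkowski sum, strict separation for the first claim, boundary membership of Pareto points for the second, and the downward-closure perturbation argument that forces the normals to be nonnegative so they can be rescaled into weight vectors. Nothing in your argument is flawed; this is the standard derivation of the lemma from the cited theorems.
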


\parahead{Bistochastic Matrix}
For a matrix $\vect{U}\in \mathbb{R}^{n\times n}$ for some $n$, let $u_{i,j}$ denote the element of $\vect{U}$ in the $i^\mathrm{th}$ row and $j^\mathrm{th}$ column. $\vect{U}$ is \emph{bistochastic} if $u_{i,j}\geq 0$ and $\sum_{i'=1}^n u_{i',j}=\sum_{j'=1}^n u_{i,j'}=1$ for all $1\leq i,j\leq n$. A bistochastic matrix $\vect{U}$ is a \emph{permutation matrix} if $\vect{U}$ has exactly one element with value $1$ in each row and each column. We recall the following Birkhoff–von Neumann Theorem:

\begin{lemma}[\cite{birkhoff1946three}]\label{lem:bvn}
A bistochastic matrix $\vect{U}$ 
of order $n$ is equivalent to a convex combination of permutation matrices $\vect{U}_1,\ldots, \vect{U}_k$ for some $k \leq n^2-2n+2$. 
\end{lemma}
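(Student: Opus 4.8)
The plan is to recognise the set $\mathcal{B}_n$ of all bistochastic matrices of order $n$ as a compact convex polytope sitting inside $\mathbb{R}^{n\times n}$ — the Birkhoff polytope — and to prove two facts: (a) its extreme points are exactly the permutation matrices, and (b) every point of a convex polytope is a convex combination of boundedly many of its extreme points. Fact (a) supplies the permutation matrices $\vect{U}_1,\ldots,\vect{U}_k$, and a dimension count together with Carath\'eodory's theorem will pin down the bound $k\le n^2-2n+2$.

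First I would observe that $\mathcal{B}_n$ is convex and compact, being the intersection of the affine constraints $\sum_{i'} u_{i',j}=1$ and $\sum_{j'} u_{i,j'}=1$ with the nonnegative orthant $u_{i,j}\ge 0$, hence a bounded polyhedron; each permutation matrix clearly lies in $\mathcal{B}_n$. The combinatorial core is to show that every $\vect{U}\in\mathcal{B}_n$ is a convex combination of permutation matrices, which I would prove by induction on the number of strictly positive entries of $\vect{U}$. The enabling lemma is that the support $\{(i,j):u_{i,j}>0\}$ contains the support of some permutation matrix $\vect{P}$; equivalently, the bipartite graph on rows and columns having an edge for each positive entry admits a perfect matching. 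This follows from Hall's marriage theorem: for any set $I$ of rows, the entries lying in those rows sum to $|I|$ yet are distributed only among the columns those rows reach, so the number of reachable columns is at least $|I|$, verifying Hall's condition. Given such a $\vect{P}$, I would set $\lambda=\min_{(i,j)\in\mathrm{supp}(\vect{P})} u_{i,j}$. If $\lambda=1$ then $\vect{U}=\vect{P}$; otherwise $\frac{1}{1-\lambda}(\vect{U}-\lambda\vect{P})$ is again bistochastic but has strictly fewer positive entries, so the induction hypothesis completes the decomposition $\vect{U}=\lambda\vect{P}+(1-\lambda)\vect{U}'$.

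To obtain the stated bound on $k$, I would apply Carath\'eodory's theorem within the affine hull of $\mathcal{B}_n$. Although $\mathcal{B}_n\subseteq\mathbb{R}^{n^2}$, the $2n$ defining equalities are dependent: summing the $n$ row-sum constraints and summing the $n$ column-sum constraints both yield $\sum_{i,j}u_{i,j}=n$, so exactly one equality is redundant and the affine dimension of $\mathcal{B}_n$ equals $n^2-(2n-1)=(n-1)^2$. Carath\'eodory's theorem then guarantees that any point of a convex set of affine dimension $d$ is a convex combination of at most $d+1$ of its extreme points, giving $k\le (n-1)^2+1=n^2-2n+2$, as required.

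The main obstacle is the support lemma: showing that the positive pattern of a doubly stochastic matrix always admits a perfect matching. This is precisely where the bistochastic hypothesis is used — the row/column normalisation is what forces Hall's condition — and it is what upgrades a mere nonnegative-entry statement into a genuine permutation decomposition. A secondary subtlety is reconciling the count with the polytope's dimension: the inductive peeling argument on its own only bounds $k$ by the number of positive entries (at most $n^2$), so the sharper bound $n^2-2n+2$ genuinely requires the separate Carath\'eodory-plus-dimension argument rather than the induction alone.
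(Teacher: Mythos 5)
Your proof is correct: the Hall's-theorem argument for a permutation matrix inside the support, the peeling induction, and the Carath\'eodory-plus-dimension count giving $k\le (n-1)^2+1=n^2-2n+2$ are all sound, and you rightly note that the peeling alone would not yield the sharp bound. Note that the paper itself gives no proof of this lemma---it is the classical Birkhoff--von Neumann theorem, cited directly from the literature---so your argument is simply the standard textbook proof of the cited result, and there is nothing in the paper to compare it against.
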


\parahead{Random Assignment}
Given a set $I$ (resp., $J$) of agents (resp., tasks) with $|I|= |J|$, a (balanced) \emph{assignment} is a bijective function $f: J\to I$.
Denote the set of assignments of $J$ to $I$ by $\mathcal{F}$. 
A \emph{random assignment} $\dist$ is a randomised distribution over $\mathcal{F}$ (or, equivalently, a convex combination of assignments in $\mathcal{F}$).
For convenience, let $I=J=\{1,\ldots, n\}$. 
Let $\dist_{j\to i}=\dist(\{f\in \mathcal{F}\mid f(j)=i\})$, namely, the marginal probability of assigning task $j$ to agent $i$ according to $\dist$.
Clearly, any assignment is equivalent to a permutation matrix.
Moreover, by Lemma~\ref{lem:bvn} any bistochastic matrix $\vect{U}$ is equivalent to a random assignment $\dist$ such that $u_{i,j}= \dist_{j\to i}$.

\section{Problem and Approach}\label{sec:approach}
\subsection{Problem Statement}

In our MAS setting, each agent is an MDP (with a reward structure) and each task is a DFA (equivalently, a co-safe LTL formula), and the rewards are the probabilities 
of accomplishing the tasks and the costs (as negative rewards) 
of agents executing tasks.
Therefore, we aim to compute a random task assignment and schedulers for all agents and tasks, which 
must meet multiple probability and cost requirements.
Intuitively, we consider the task assignment and agent planning scenario satisfying the following two conditions \cite{schillinger_simultaneous_2018}:
\begin{description}
    \item[C1.] The tasks are mutually independent. 
    \item[C2.] The behaviours of agents do not impact each other.
\end{description}

 
For each $(i,j)\in I\times J$,\footnote{Throughout the paper we assume $I=J=\{1,\ldots,n\}$ for some $n$ (unless explicitly stated otherwise) but still use $I,J$ to indicate the agent or task references.} we define an \emph{agent-task} (product) MDP $\mathcal{M}_{i\otimes j}[\rho_i]\doteq \mathcal{M}_i[\rho_i]\otimes \dfa _j$ 
and include an atomic proposition $\mathtt{done}_j$ such that
\begin{itemize}
    \item[] $L_{i,j}(s,q)\models\mathtt{done}_j$ iff $q\in Q_{j,F}\cup Q_{j,R}$ 
\end{itemize} 
which indicates ``task $j$ is ended (either accomplished or failed).'' 
For each $j\in J$ we define a designated reward function $\rho_{j+|I|}:\bigcupdot_{i\in I} (S_i\times Q_j\times A_i)\to \{0,1\}$ such that $\rho_{j+|I|}(s,q,a)=1$ iff $q\notin Q_{j,F}$ and $\mathrm{suc}(q)\subseteq Q_{j,F}$. If such a pre-sink $q$ does not exist, we can modify $\dfa_j$ to include $q$ without altering $acc(\dfa_j)$. In words, $\rho_{j+|I|}$ provides a one-off unit reward whenever an accepting location will be traversed \emph{for the first time}. Informally, $\rho_{j+|I|}$ expresses ``the probability of accomplishing task $j$.''
As the atomic proposition $\mathtt{done}_j$ is fixed for each $\mdp_{i\otimes j}$, we \emph{abbreviate $\mathbf{E}^{\mathcal{M}_{i\otimes j}[\rho_{k}],\schr_{i,j}}(\evt \mathtt{done}_{j}) $ as $\mathbf{E}^{\mathcal{M}_{i\otimes j}[\rho_{k}],\schr_{i,j}}$} where $k=j$ or $k=j+|I|$. 
As the reachability rewards for agents may be infinite and cause instability in computation, similar to the multi-objective verification literature \cite{forejt2012pareto,hahn2017multi}, we require that $\mdp_{i\otimes j}[\rho_i]$ is reward-finite (w.r.t.\ $\evt\mathtt{done}_j$) for all $(i,j)\in I\times J$. 

\begin{definition}[MORAP] \label{def:MORAP} 
A \emph{multi-objective random assignment and planning} (MORAP) problem is finding a bistochastic matrix $(x_{i,j})_{i\in I,j\in J}$ and a set of schedulers $\{\schr_{i,j}\in Sch(\mathcal{M}_{i\otimes j})\mid i\in I, j\in J\}$ 
such that the following two kinds of requirements, 
namely R1 and R2, are satisfied:
\begin{description}
\item[\textbf{(R1. Probability)}] $\textstyle\sum_{i\in I} x_{i,j}\mathbf{E}^{\mathcal{M}_{i\otimes j}[\rho_{j+|I|}],\schr_{i,j}}\geq p_j$ for all $j\in J$,
\item[\textbf{(R2. Cost)}] $\textstyle\sum_{j\in J} x_{i,j}\mathbf{E}^{\mathcal{M}_{i\otimes j}[\rho_i],\schr_{i,j}}\geq c_i$  for all $i\in I$,
\end{description}
where the probability thresholds $ (p_j)_{j\in J} \in [0,1]^{|J|}$ and the cost thresholds $(c_i)_{i\in I}\in \mathbb{R}^{|I|}$ are given.
If the above requirements are satisfied, we say that the MORAP problem is \emph{feasible} with given thresholds or just that the thresholds are \emph{feasible}.
\end{definition}

Definition~\ref{def:MORAP} is an adequate formulation in the presence of conditions C1 and C2. First, since tasks are mutually independent (C1), the probability requirements only need to address the successful probability of each task. Second, since the execution of any task by each agent does not impact other agents (C2), the cost requirements only need to consider the cost of each agent. In practice, we can relax the condition $|I|=|J|$ to $|I|\geq |J|$ (e.g., adding dummy tasks whose probability threshold is $0$).

\subsection{Convex Characterisation and Centralised Model}\label{sec:convexity}

\begin{figure}[t]
\begin{equation*}
\boxed{
\begin{aligned}
& \text{Maximise}\\
&
\begin{cases}
    \sum_{j\in J}\sum_{(s,q)\in S_i\times Q_j}\sum_{a\in A_i(s)} \rho_i(s,q,a)x_{s,q,a} & \forall i\in I \\
    \sum_{i\in I}\sum_{(s,q)\in S_i\times Q_{j}}\sum_{a\in A_i(s)}\rho_{j+|I|}(s,q,a) x_{s,q,a} & \forall j\in J \\
\end{cases} \\
& \text{Subject to}\ \forall i\in I,j\in J ,  (s, q)\in S_i\times Q_j\text{:}\\ 
& \begin{cases}
    \sum_{a\in A_i(s)}x_{s,q,a} - \mathrm{I}_{(s,q)=(s_{i,0},q_{j,0})}x_{i,j} \\ 
    \quad =\ \sum_{(s',q')\in S_i\times Q_j}\sum_{a'\in A_i(s')}  P_{i,j}(s',q',a',s,q) x_{s',q',a'} \\
    x_{s,q,a} \geq 0;\
    x_{i,j} \geq 0; \
    \sum_{i'\in I} x_{i',j}= 1; \
    \sum_{j'\in I} x_{i,j'}= 1 \\
\end{cases}\\
\end{aligned}
}
\end{equation*}
\caption{The multi-objective linear program for MORAP}
    \label{fig:lp}
    \vspace{-5mm}
\end{figure}

An essential characteristic of our MORAP problem is \emph{convexity}, namely, the downward closure of feasible probability and cost thresholds is a convex polytope (i.e., the downward convex hull of some finite set of points).
This follows from the fact that the MORAP problem can be expressed as a multi-objective linear program (LP) by using a similar technique which underpins multi-objective verification of MDPs \cite{forejt2012pareto,papadimitriou2000approximability,etessami2007multi}. 
Fig.~\ref{fig:lp} includes the multi-objective LP for MORAP.
Intuitively, for each $(i,j)\in I\times J$, $x_{i,j}$ represents the probability of assigning $j$ to $i$ (c.f., Lemma~\ref{lem:bvn}), and for each $(s,q)\in S_i\times Q_i$, $x_{s,q,a}$ is the expected frequency of visiting $(s,q)$ and taking action $a$. A memoryless scheduler can be defined as follows: $\schr_{i,j}(s,q)(a)=x_{s,q,a}/x_{s,q}$ where $x_{s,q}=\sum_{a\in A_i(s)}x_{s,q,a}$.
Thus, a MORAP problem has the following time complexity:
\begin{theorem}\label{thm:complexity}
The feasibility of a MORAP problem is decidable in time polynomial in $\sum_{i\in I,j\in J}|\mdp_{i\otimes j}|$.
\end{theorem}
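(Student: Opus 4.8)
The plan is to establish feasibility of the MORAP problem by reducing it to the multi-objective linear program (LP) presented in Fig.~\ref{fig:lp}, and then invoking the polynomial-time solvability of linear programming. The central claim to verify is that the LP in Fig.~\ref{fig:lp} has a feasible solution satisfying the threshold requirements R1 and R2 if and only if the original MORAP problem is feasible. Concretely, I would argue both directions of this equivalence: given a feasible MORAP solution (a bistochastic matrix and a family of schedulers), I construct LP variables $x_{i,j}$ and $x_{s,q,a}$ that satisfy the constraints; and conversely, given an LP solution I recover a bistochastic matrix and memoryless schedulers via the normalisation $\schr_{i,j}(s,q)(a)=x_{s,q,a}/x_{s,q}$ described in the excerpt.

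First I would make precise the semantics of the LP variables. The variables $x_{i,j}$ encode the marginal assignment probabilities, and the row/column sum constraints $\sum_{i'}x_{i',j}=1$ and $\sum_{j'}x_{i,j'}=1$ together with $x_{i,j}\geq 0$ force $(x_{i,j})$ to be bistochastic; by Lemma~\ref{lem:bvn} this is equivalent to a random assignment. The variables $x_{s,q,a}$ are the expected visitation frequencies (occupation measures) on each agent-task product MDP $\mdp_{i\otimes j}$, and the flow-balance equality constraint (the Bellman-type equation with the indicator term $\mathrm{I}_{(s,q)=(s_{i,0},q_{j,0})}x_{i,j}$ as the source) is the standard occupation-measure characterisation of reachability rewards for MDPs. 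The key point is that because tasks are mutually independent (C1) and agents do not interact (C2), the global objectives decompose additively: the objective rows for $\rho_i$ sum the expected cost over all tasks weighted by the occupation measures, and the objective rows for $\rho_{j+|I|}$ sum the accomplishment probability over all agents, matching exactly the left-hand sides of R2 and R1 respectively.

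The main technical step — and the expected obstacle — is the correspondence between occupation measures and reachability rewards on the product MDPs, i.e.\ verifying that the flow constraints faithfully capture $\mathbf{E}^{\mdp_{i\otimes j}[\rho],\schr_{i,j}}$. This is where reward-finiteness (assumed for all $(i,j)$) is essential: it guarantees that the relevant occupation measures are finite and that the standard LP characterisation of expected total reward until reaching the $\mathtt{done}_j$ states is valid. Since accepting locations are sinks and we absorb at $\mathtt{done}_j$, the expected frequencies are well-defined, and one recovers the standard result (as in \cite{forejt2012pareto,etessami2007multi}) that a memoryless scheduler suffices and that occupation measures form the feasible region of exactly this LP. I would cite these existing multi-objective MDP results rather than reprove the occupation-measure theory from scratch, noting that the scaling by $x_{i,j}$ simply weights each product MDP's occupation measure by the assignment probability.

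Finally, for the complexity bound itself, I would observe that the LP in Fig.~\ref{fig:lp} has a number of variables and constraints that is linear in $\sum_{i\in I,j\in J}|\mdp_{i\otimes j}|$ (the $x_{s,q,a}$ variables are indexed by state-action pairs across all product MDPs, and the $x_{i,j}$ variables contribute a lower-order $n^2$ term). Deciding feasibility of a system of linear (in)equalities with rational coefficients is solvable in time polynomial in the size of the program, e.g.\ by the ellipsoid or interior-point method; appending the threshold inequalities R1 and R2 as additional linear constraints preserves polynomiality. Combining the equivalence established above with the polynomial-time LP feasibility test yields the claimed bound. The delicate aspects to get right in the writeup are the handling of trap states $Q_{j,R}$ in the flow equations and confirming that the indicator source term correctly distributes the initial unit of flow according to $x_{i,j}$.
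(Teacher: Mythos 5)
Your proposal is correct and takes essentially the same route as the paper: the paper justifies Theorem~\ref{thm:complexity} precisely by expressing MORAP as the multi-objective LP of Fig.~\ref{fig:lp}, reading $x_{i,j}$ as assignment probabilities (via Lemma~\ref{lem:bvn}) and $x_{s,q,a}$ as expected visitation frequencies with schedulers recovered by the normalisation $\schr_{i,j}(s,q)(a)=x_{s,q,a}/x_{s,q}$, and appealing to the standard occupation-measure technique of \cite{forejt2012pareto,papadimitriou2000approximability,etessami2007multi} together with polynomial-time LP solvability. The only difference is level of detail: the paper leaves the two-directional MORAP--LP correspondence implicit, whereas you spell it out explicitly.
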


LP is not efficient for large problems, and value- and policy-iteration methods are more scalable methods in practice. For this purpose, we define a 
new MDP which combines all agent-task MDPs and includes an additional variable indicating which agents have been assigned with tasks.
This MDP is targeted directly at solving the random assignment problem in a centralised way

\begin{definition}[Centralised MDP] \label{def:teamMDP}
A centralisd MDP is $\tmdp  = (S^\tlab , {s}^\tlab _{0}, {A}^\tlab , $ ${P}^\tlab , {L}^\tlab )$
where (i) $S^\tlab = \bigcupdot_{i\in I}\bigcupdot_{j\in J} S_{i}\times Q_{j} \times 2^{I}$, (ii) $s^\tlab _{0} = (s_{1,0}, q_{1,0},\emptyset)$, (iii) $A^\tlab =\bigcup_{i\in I} A_{i} \cupdot \{b_1,b_2,b_3\}$, (iv) $P^\tlab  = S^\tlab \times A^\tlab \times S^\tlab \to [0,1]$ such that:
\begin{itemize}
		\item ${P}^\tlab  (s, q,\sharp  , a,s',q', \sharp) ={P}_{i,j}  (s, q,a,s',q')$ if $s,s'\in S_i$, $q,q'\in Q_j$, $a\in A_i(s)$ and $i\in \sharp$ for some $i,j$,
 		\item ${P}^\tlab  (s, q,\sharp, b_1,s, q, \sharp\cup \{i\}) = 1$  if $s = s_{i,0}$, $q = q_{j,0}$ and $i\notin \sharp$ for some $i,j$,
 		\item ${P}^\tlab  (s, q,\sharp, b_2,s', q, \sharp) = 1$  if $s = s_{i,0}$, $q = q_{j,0}$ and $s'=s_{i',0}$ with $i'=\min\{i''\in I \mid i''>i, i''\notin \sharp\}$ for some $i,j$,
 		\item ${P}^\tlab  (s,q,\sharp, b_3,s', q',\sharp) = 1$  if $s \in  S_i$, $q \in Q_{j,F}\cup Q_{j,R}$, $i
 		\in \sharp$, $s'=s_{i',0}$ with $i'= \min\{i''\in I \mid i''\notin \sharp\}$, and $q'= q_{j+1,0}$ for some $i$, $j<|J|$,
 	\end{itemize}
(v) $L^\tlab : S^\tlab \to 2^{\{ \mathtt{done}\}}$ such that $ L^\tlab (s,q)\models \mathtt{done}$ iff $q\in Q_{j,F}\cup Q_{j,R}$.
\end{definition}

\noindent Intuitively, $\sharp
\subseteq I$ indicates agents who have worked on some tasks; $b_1$ indicates ``a task is assigned to the current agent''; $b_2$ indicates ``a task is forwarded to the next agent''; and $b_3$ indicates ``the next task is considered''. The model behaves as an individual product MDP when working on the assigned tasks.

Given any reward structure $\rho$ for $\mdp_{i\otimes j}$, we view $\rho$ as a reward structure for $\tmdp$ by letting $\rho(s,q,\sharp,a)=\rho(s,q,a)$ 
if $a \in A_i(s)$ 
and $\rho(s,q,\sharp,a)=0$ otherwise for all $(s,q,\sharp,a)$. Similarly, given any reward structure $\rho$ for $\tmdp$, a restriction of $\rho$ on $S_i\times Q_j\times A_i$ is a reward structure for $\mdp_{i\otimes j}$. Similar to agent-task MDPs, we abbreviate $\mathbf{E}^{\tmdp [\rho],\schr}(\evt \mathtt{done})$ as $\mathbf{E}^{\tmdp [\rho],\schr}$ for a given $\rho$.

\begin{theorem} \label{thm:main}
The MORAP problem in Definition~\ref{def:MORAP} is feasible with respect to $(p_j)_{j\in J}$ and $(c_i)_{i\in I}$ 
if and only if there is $\schr\in Sch(\tmdp )$ such that $\mathbf{E}^{\tmdp [\rho_{j+|J|}],\schr}\geq p_j$ and $\mathbf{E}^{\tmdp [\rho_{i}],\schr}\geq c_i$ for all $i\in I,j\in J$.
\end{theorem}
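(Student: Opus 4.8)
The plan is to prove both implications by setting up an exact correspondence between, on one side, a bistochastic matrix together with per-pair schedulers $\{\schr_{i,j}\}$, and, on the other side, a single scheduler $\schr$ on the centralised model $\tmdp$. The driving observation is that, because of conditions C1 and C2, both R1/R2 and the centralised objectives $\mathbf{E}^{\tmdp[\rho_i],\schr}$ and $\mathbf{E}^{\tmdp[\rho_{j+|I|}],\schr}$ depend only on (a) the \emph{marginal} probabilities with which the sequential process of $\tmdp$ assigns task $j$ to agent $i$, and (b) the conditional behaviour of $\schr$ while a fixed agent executes a fixed task. Concretely, I would first record the reward-decomposition identity
\[
\mathbf{E}^{\tmdp[\rho_i],\schr}=\textstyle\sum_{j\in J} x_{i,j}\,\mathbf{E}^{\mdp_{i\otimes j}[\rho_i],\schr_{i,j}},\qquad \mathbf{E}^{\tmdp[\rho_{j+|I|}],\schr}=\textstyle\sum_{i\in I} x_{i,j}\,\mathbf{E}^{\mdp_{i\otimes j}[\rho_{j+|I|}],\schr_{i,j}}
\]
where $x_{i,j}$ is the induced marginal assignment probability and $\schr_{i,j}$ the induced per-pair scheduler; the identity holds because $\rho_i$ is supported only on the actions $A_i$ taken while agent $i$ works on its single assigned task, and $\rho_{j+|I|}$ fires exactly once, during the execution of task $j$. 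Matching these against R1 and R2 makes the equivalence transparent once the correspondence is established.

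For the direction ($\Leftarrow$), suppose $\schr\in Sch(\tmdp)$ meets the thresholds. I would set $x_{i,j}$ to be the probability, under $\schr$ from the initial state, that the $b_1$-choice commits task $j$ to agent $i$. Since in $\tmdp$ each task is committed to exactly one agent and each agent receives exactly one task---the $\sharp$-bookkeeping together with $b_1,b_2,b_3$ forces the realised assignment to be a permutation almost surely, and $b_2$ is disabled at the last free agent so a commitment is always made---the matrix $(x_{i,j})$ has unit row and column sums and is bistochastic. I would then obtain $\schr_{i,j}$ by conditioning $\schr$ on the event that $j$ is assigned to $i$ and reading off the distribution over the agent-actions $A_i(s)$; because $P^\tlab$ coincides with $P_{i,j}$ on exactly these transitions, this is a genuine (history-dependent) scheduler on $\mdp_{i\otimes j}$, and since only the two reachability objectives $\rho_i,\rho_{j+|I|}$ matter, it may be replaced by a memoryless randomised scheduler achieving the same two expected values (standard for multi-objective reachability). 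The decomposition identity then yields R1 and R2 directly.

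For the direction ($\Rightarrow$), suppose $(x_{i,j})$ and $\{\schr_{i,j}\}$ satisfy R1 and R2. By Lemma~\ref{lem:bvn} the bistochastic matrix is a convex combination of permutation matrices, i.e.\ a random assignment $\dist$ with $\dist_{j\to i}=x_{i,j}$. I would build a memoryless $\schr$ on $\tmdp$ in two parts: while agent $i$ executes task $j$ (i.e.\ $i\in\sharp$ and the $\dfa_j$-component is active), $\schr$ copies $\schr_{i,j}$ on $A_i$; at the assignment-choice states $\schr$ randomises over $b_1$ (commit to the current agent) and $b_2$ (forward to the next free agent) so that, while processing task $j$ with already-committed set $\sharp$, the chosen free agent follows a prescribed conditional distribution $g_j(\cdot\mid\sharp)$. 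The remaining obligation is to choose the $g_j$ so that the induced marginals equal $x_{i,j}$. I would do this by induction on the task index: the task-$1$ marginals are set directly by $g_1$; given that the marginals of tasks $1,\dots,j-1$ already match, the freedom in $g_j(\cdot\mid\sharp)$ suffices to hit the values $x_{\cdot,j}$, and the final task's marginals are then forced to the correct values by the row-sum identities $\sum_{j}x_{i,j}=1$. Feeding the resulting marginals into the decomposition identity gives the centralised thresholds.

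The main obstacle is the marginal-realisability step in ($\Rightarrow$): the state of $\tmdp$ retains only the \emph{set} $\sharp$ of already-committed agents, not the order of commitment, so a memoryless $\schr$ cannot reproduce an arbitrary joint distribution over assignments---only its one-dimensional marginals. The crux is therefore to show, via the inductive row/column-sum argument above, that \emph{every} bistochastic matrix of marginals is realisable in this set-filtered, Markov fashion, which is exactly what is needed since R1/R2 and the centralised objectives see only marginals. A secondary point, to be dispatched first, is the reward-localisation/termination claim underlying the decomposition identity: one must verify that under any admissible $\schr$ the sequential process almost surely completes all tasks---so that $\rho_i$ and $\rho_{j+|I|}$ are fully accrued in the correct agent-task phase---which follows from reward-finiteness of each $\mdp_{i\otimes j}[\rho_i]$ together with the fact that the accepting and trap locations of each $\dfa_j$ are sinks.
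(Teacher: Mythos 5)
There is a genuine gap in your ($\Rightarrow$) direction, at exactly the step you flag as the crux. The claim ``given that the marginals of tasks $1,\dots,j-1$ already match, the freedom in $g_j(\cdot\mid\sharp)$ suffices to hit the values $x_{\cdot,j}$'' is false: whether task $j$'s marginals can be matched is a transportation problem whose supplies are the probabilities $Q_{j-1}(\sharp)$ of the committed sets, and these depend on \emph{which} conditionals were chosen earlier, not merely on the earlier marginals. Concretely, take $n=4$ and the bistochastic matrix whose task columns have agent-marginals $(\tfrac12,\tfrac12,0,0)$, $(0,0,\tfrac12,\tfrac12)$, $(\tfrac12,0,\tfrac12,0)$, $(0,\tfrac12,0,\tfrac12)$. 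Task $1$ forces $g_1(1)=g_1(2)=\tfrac12$. The choice $g_2(3\mid\{1\})=g_2(3\mid\{2\})=\tfrac12$ matches task $2$'s marginals and makes $Q_2$ uniform over $\{1,3\},\{1,4\},\{2,3\},\{2,4\}$. But then matching $P(3\to 1)=\tfrac12$ forces $g_3(1\mid\{2,3\})=g_3(1\mid\{2,4\})=1$, while matching $P(3\to 3)=\tfrac12$ forces $g_3(3\mid\{1,4\})=g_3(3\mid\{2,4\})=1$; at $\sharp=\{2,4\}$ these conflict, so no admissible $g_3$ exists. Your induction therefore collapses at step $3$ even though all earlier marginals were matched.

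The repair uses the object you invoked but then set aside: the random assignment $\dist$ with $\dist_{j\to i}=x_{i,j}$ provided by Lemma~\ref{lem:bvn}. Do not choose the $g_j$ greedily; define them as conditional laws of the \emph{single} joint distribution $\dist$, i.e.\ $g_j(i\mid\sharp):=\Pr_{f\sim\dist}\left[f(j)=i \mid \{f(1),\dots,f(j-1)\}=\sharp\right]$ whenever the conditioning event has positive probability (in the example above this yields $g_2(3\mid\{1\})=0$ and $g_2(3\mid\{2\})=1$, which does extend). A short induction on $j$ then shows that the distribution of the committed set after $j$ tasks under the induced Markov process equals the law of $\{f(1),\dots,f(j)\}$ under $\dist$, hence the realized marginals are exactly $x_{i,j}$, and your decomposition identity finishes the direction. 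Be aware that this step is genuinely delicate: the paper's own proof instead posits the closed-form commitment probability $y_{i,j,\sharp}=x_{i,j}/(x_{i,j}+\sum_{i'>i,\,i'\notin\sharp}x_{i',j})$ (proportional sampling among remaining free agents), and on the example above this rule makes precisely the failing choice $g_2(3\mid\{1\})=g_2(3\mid\{2\})=\tfrac12$, so it does not reproduce the marginals either; the conditional-of-a-joint construction is the robust way to discharge this step. The remainder of your proposal --- the reward-decomposition identity, and the ($\Leftarrow$) direction via induced marginals, the path argument for bistochasticity, and conditioning (plus memoryless sufficiency) to obtain $\schr_{i,j}$ --- aligns with the paper's argument and is sound.
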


With the above theorem, one can work on the centralised MDP $\tmdp$ (e.g., by using value-iteration) to solve a MORAP problem. Therefore, existing probabilistic model checking tools for multi-objective MDP verification (e.g., Prism \cite{KNP11} and Storm \cite{hensel2022probabilistic}) can be employed.
However, the state space of $\tmdp$ is exponential with respect to the agent team size $|I|$. Therefore, this approach is hard to scale to a relatively large $|I|$.

\subsection{Point-Oriented Pareto Computation by Decentralised Model}
\label{sec:decentralised-synthesis}
We present a decentralised method solve a given MORAP problem, especially when the agent number (i.e., task number) is large.
Besides deciding whether the problem is feasible or not, for a non-feasible problem our method also computes a new feasible threshold vector on the Pareto curve of the problem, and nearest the original threshold vector up to some numerical tolerance.
\begin{algorithm}[t]
\caption{Point-oriented Pareto computation\label{alg:scheduler-synthesis}}
\SetKwFor{ParallelForEach}{parallel foreach}{do}{}
\SetKwBlock{DoParallel}{do in parallel}{}
\KwIn{$\{\mathcal{M}_{i\otimes j}\}_{(i,j)\in I\times J}$, $\vect{\rho}=\{\rho_k\}_{k=1}^{|I|+|J|}$,
$ \vect{t}$ (a concatenation of $\vect{c}$ and $\vect{p}$), $\varepsilon\geq 0$}
$\vect{t}_{\uparrow}:=-\vect{\infty}$; $\vect{t}_{\downarrow}:=\vect{t}$; $\Phi:=\emptyset$; $\Lambda: =\emptyset$; $\vect{w}:=(1,0,\ldots,0)$\;
\While{$\|\vect{t}_{\downarrow}- \vect{t}_{\uparrow}\|>\varepsilon$ \label{ln:condition0}}{
\If{$\Phi\neq \emptyset$}{
\label{ln:inner_start}
Find $\vect{x}\in {down}(\Phi)$ minimising $\|\vect{t}-\vect{x}\|$\;\label{ln:find-minimum1}
$\vect{t}_{\uparrow}:= \vect{x}$\; 
$\vect{w}:=\vect{M}(\vect{t}-\vect{t}_{\uparrow})/\|\vect{M}(\vect{t}-\vect{t}_{\uparrow})\|_1$\;\label{ln:find-w-1}
}
Find $\vect{r}$ s.t.\ $\{\vect{y}\mid \vect{w}\cdot\vect{y}= \vect{w}\cdot\vect{r}\}$ is a supporting hyperplane of $\mathscr{C}$ \;\label{ln:find-supp-hp}
$\Phi:=\Phi\cup \{\vect{r}\}$; $\Lambda:=\Lambda\cup \{(\vect{w},\vect{r})\}$\;\label{ln:add-vector}
\If{$  \vect{w} \cdot\vect{r}  <  \vect{w} \cdot \vect{t}_{\downarrow}$\label{ln:condition1}}{
Find $\vect{z}$ minimising $\|\vect{t}-\vect{z}\|$ s.t.\ $\vect{w}'\cdot\vect{r}' \geq \vect{w}'\cdot \vect{z}$ for all $(\vect{w}',\vect{r}')\in \Lambda$\;\label{ln:find-minimum2}
$\vect{t}_{\downarrow}:= \vect{z}$\;\label{ln:update-t-tmp}
}
}
\end{algorithm}


Let $\mathscr{C}_0= \{ (\mathbf{E}^{\tmdp[\rho_k],\schr})_{1\leq k\leq |I|+|J|}\mid \schr\in {Sch}(\tmdp)\}$. The reward-finiteness implies that $\mathscr{C}_0$ is non-empty and bounded. Let $\mathscr{C}$ be the downward closure of $\mathscr{C}_0$, i.e., 
namely, $\mathscr{C}$ is \emph{the set of feasible threshold vectors} in Definition~\ref{def:MORAP}.
The main algorithm for our method is presented in Algorithm~\ref{alg:scheduler-synthesis} with the supporting hyperplane computation (i.e., Line~\ref{ln:find-supp-hp}) detailed in Algorithm~\ref{alg:find-supporting-hyperplane}.
Algorithm~\ref{alg:scheduler-synthesis} works by iteratively refining a \emph{lower approximation}, encoded as $\Phi$, and an \emph{upper approximation}, encoded as $\Lambda$, for $\mathscr{C}$. It computes a vector $\vect{t}_{\uparrow}$ (resp., $\vect{t}_{\downarrow}$) which is the closest point from the origin threshold vector $\vect{t}$ to the lower (resp., upper) approximation such that $\vect{t}_{\uparrow}$ and  $\vect{t}_{\downarrow}$ converge eventually. 

The algorithm uses a general norm $\|\cdot\|$ to measure the distance between vectors, as in practice one may prefer to differentiate the importance of probability and cost thresholds.
An \emph{inner product} of $\vect{v},\vect{u}\in \mathbb{R}^{m}$ ($m$ a positive integer), denoted $\langle\vect{v},\vect{u}\rangle$,  is the matrix-vector multiplication $\vect{v}^T\vect{M}\vect{u}$, where $\vect{M}$ is a symmetric positive-definite matrix. 
Note that if $\vect{M}$ is the identity matrix then $\langle\vect{v},\vect{u}\rangle$ is $\vect{v}\cdot\vect{u}$.
Then, $\|\vect{v}|=\langle\vect{v},\vect{v}\rangle$.
Let $\|\cdot\|_1$ denote vector 1-norm. 
The weight vector $\vect{w}$ computed in Line~\ref{ln:find-w-1} is the (opposite) direction of projecting $\vect{t}$ onto ${down}(\Phi)$. 
Moreover, 
$\vect{w}\cdot\vect{\rho}$ denotes a weighted combination of reward functions in $\vect{\rho}$.

\begin{theorem}\label{thm:synthesis}
Algorithm~\ref{alg:scheduler-synthesis} terminates for any $\varepsilon\geq 0$. Throughout the execution of Algorithm~\ref{alg:scheduler-synthesis}, the following properties hold: (i) $\vect{t}_{\uparrow}\in \mathscr{C}$. (ii) If $\vect{t}\in \mathscr{C}$ then $\vect{t}_{\downarrow} = \vect{t}$. (iii) $\|\vect{t}-\vect{t}_{\downarrow}\|\leq \min_{\vect{u}\in\mathscr{C}}\|\vect{t} -\vect{u}\|\leq \|\vect{t}-\vect{t}_{\uparrow}\|$.
\end{theorem}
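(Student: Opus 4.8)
The plan is to maintain two nested approximations of $\mathscr{C}$ and read off the three invariants geometrically, leaving termination as the substantive point. Write $\underline{\mathscr{C}}:={down}(\Phi)$ for the \emph{lower} approximation and $\overline{\mathscr{C}}:=\bigcap_{(\vect{w}',\vect{r}')\in\Lambda}\{\vect{z}\mid\vect{w}'\cdot\vect{z}\leq\vect{w}'\cdot\vect{r}'\}$ for the \emph{upper} approximation. The first step is to prove the loop invariant $\underline{\mathscr{C}}\subseteq\mathscr{C}\subseteq\overline{\mathscr{C}}$. Every $\vect{r}$ added on Line~\ref{ln:add-vector} lies on a supporting hyperplane of $\mathscr{C}$ (Line~\ref{ln:find-supp-hp}), so $\vect{r}\in\mathscr{C}$ and, by Lemma~\ref{lem:hyperplane}, $\vect{w}\cdot\vect{x}\leq\vect{w}\cdot\vect{r}$ for all $\vect{x}\in\mathscr{C}$; hence each halfspace recorded in $\Lambda$ contains $\mathscr{C}$, giving $\mathscr{C}\subseteq\overline{\mathscr{C}}$, while convexity and downward-closedness of $\mathscr{C}$ yield $\underline{\mathscr{C}}\subseteq\mathscr{C}$. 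I would also note that $\vect{t}_{\uparrow}$ is always the $\|\cdot\|$-projection of $\vect{t}$ onto $\underline{\mathscr{C}}$ (Lines~\ref{ln:find-minimum1}--\ref{ln:find-w-1}) and $\vect{t}_{\downarrow}$ the $\|\cdot\|$-projection of $\vect{t}$ onto $\overline{\mathscr{C}}$: the guard on Line~\ref{ln:condition1} recomputes $\vect{t}_{\downarrow}$ precisely when the stored value leaves the newly added halfspace, and otherwise the old projection still minimises over the shrunken $\overline{\mathscr{C}}$.

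Given $\underline{\mathscr{C}}\subseteq\mathscr{C}\subseteq\overline{\mathscr{C}}$, the three stated properties are routine. For (i), $\vect{t}_{\uparrow}\in\underline{\mathscr{C}}\subseteq\mathscr{C}$ whenever it is finite, the initial $-\vect{\infty}$ being dominated by every point of $\mathscr{C}$. For (ii), if $\vect{t}\in\mathscr{C}\subseteq\overline{\mathscr{C}}$ then the projection of $\vect{t}$ onto $\overline{\mathscr{C}}$ is $\vect{t}$ itself; equivalently, since $\vect{r}$ maximises $\vect{w}\cdot{}$ over $\mathscr{C}\ni\vect{t}$ we have $\vect{w}\cdot\vect{r}\geq\vect{w}\cdot\vect{t}$, so the guard on Line~\ref{ln:condition1} never fires and $\vect{t}_{\downarrow}$ keeps its initial value $\vect{t}$. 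For (iii), with $d^{*}:=\min_{\vect{u}\in\mathscr{C}}\|\vect{t}-\vect{u}\|$, the inclusion $\underline{\mathscr{C}}\subseteq\mathscr{C}$ gives $\|\vect{t}-\vect{t}_{\uparrow}\|=\min_{\vect{x}\in\underline{\mathscr{C}}}\|\vect{t}-\vect{x}\|\geq d^{*}$, and $\mathscr{C}\subseteq\overline{\mathscr{C}}$ gives $\|\vect{t}-\vect{t}_{\downarrow}\|=\min_{\vect{z}\in\overline{\mathscr{C}}}\|\vect{t}-\vect{z}\|\leq d^{*}$.

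The substantive part is termination. First I would verify that the $\vect{w}$ produced on Line~\ref{ln:find-w-1} is a genuine weight vector: the residual $\vect{t}-\vect{t}_{\uparrow}$ lies in the $\langle\cdot,\cdot\rangle$-normal cone of the downward-closed set $\underline{\mathscr{C}}$ at $\vect{t}_{\uparrow}$, and since $\vect{u}\mapsto\vect{M}\vect{u}$ carries that cone into the non-negative orthant we get $\vect{M}(\vect{t}-\vect{t}_{\uparrow})\geq\vect{0}$, so after the $1$-norm normalisation $\vect{w}$ has non-negative entries summing to $1$. Next, the variational inequality for the projection onto $\underline{\mathscr{C}}$, namely $\langle\vect{t}-\vect{t}_{\uparrow},\vect{x}-\vect{t}_{\uparrow}\rangle\leq0$ for all $\vect{x}\in\underline{\mathscr{C}}$, rewrites as $\vect{w}\cdot\vect{x}\leq\vect{w}\cdot\vect{t}_{\uparrow}$; hence $\vect{t}_{\uparrow}$ maximises $\vect{w}\cdot{}$ over $\underline{\mathscr{C}}$, and as $\mathscr{C}\supseteq\underline{\mathscr{C}}$ the new point obeys $\vect{w}\cdot\vect{r}\geq\vect{w}\cdot\vect{t}_{\uparrow}\geq\vect{w}\cdot\vect{r}'$ for every earlier $\vect{r}'\in\Phi$.

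I would then split on $\vect{w}\cdot\vect{r}-\vect{w}\cdot\vect{t}_{\uparrow}$. If it is strictly positive, $\vect{r}$ differs from all earlier generated points and is a new vertex of the polytope $\mathscr{C}$; by the convexity of Section~\ref{sec:convexity} (where $\mathscr{C}$ is a downward-closed polytope with finitely many vertices, each attained by a simple scheduler) this can happen only finitely often. If it is zero, then $\vect{w}\cdot\vect{x}\leq\vect{w}\cdot\vect{t}_{\uparrow}$ for all $\vect{x}\in\mathscr{C}$, so $\vect{t}_{\uparrow}$ satisfies the projection inequality against all of $\mathscr{C}$ and is in fact the projection of $\vect{t}$ onto $\mathscr{C}$, with $\|\vect{t}-\vect{t}_{\uparrow}\|=d^{*}$; moreover, because $\vect{M}^{-1}\vect{w}$ is parallel to $\vect{t}-\vect{t}_{\uparrow}$, the point $\vect{t}_{\uparrow}$ is exactly the $\|\cdot\|$-projection of $\vect{t}$ onto the freshly added halfspace $\{\vect{z}\mid\vect{w}\cdot\vect{z}\leq\vect{w}\cdot\vect{r}\}$. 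Since $\vect{t}_{\uparrow}\in\underline{\mathscr{C}}\subseteq\overline{\mathscr{C}}$ and $\overline{\mathscr{C}}$ is contained in that halfspace, $\vect{t}_{\uparrow}$ is also the projection onto $\overline{\mathscr{C}}$, so $\vect{t}_{\downarrow}=\vect{t}_{\uparrow}$ (whether or not the guard on Line~\ref{ln:condition1} fires) and the gap $\|\vect{t}_{\downarrow}-\vect{t}_{\uparrow}\|$ becomes $0\leq\varepsilon$, exiting the loop. Hence after finitely many vertex-adding steps the zero case must occur, proving termination for every $\varepsilon\geq0$. I expect this final coincidence --- that the equality case forces both projections to land on the single point $\vect{t}_{\uparrow}$, certifying $d^{*}$ simultaneously from below and above --- to be the main obstacle to pin down rigorously.
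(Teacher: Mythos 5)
Your proposal follows essentially the same route as the paper's proof: the same pair of nested approximations $down(\Phi)\subseteq\mathscr{C}\subseteq\bigcap_{(\vect{w}',\vect{r}')\in\Lambda}\{\vect{z}\mid\vect{w}'\cdot\vect{z}\leq\vect{w}'\cdot\vect{r}'\}$, the same derivation of properties (i)--(iii) from these inclusions, the same argument (the paper's Lemma~\ref{lem:weight-vector}, which you phrase via normal cones) that the vector computed in Line~\ref{ln:find-w-1} is a weight vector, and the same dichotomy on $\vect{w}\cdot\vect{r}$ versus $\vect{w}\cdot\vect{t}_{\uparrow}$ for termination. The equality case, which you flag at the end as the likely obstacle, is in fact the part you handle correctly: your observation that $\vect{t}_{\uparrow}$ is the projection of $\vect{t}$ onto the freshly added halfspace, hence onto the upper approximation, so that $\vect{t}_{\downarrow}=\vect{t}_{\uparrow}$ whether or not the guard in Line~\ref{ln:condition1} fires, is a careful rendering of the paper's second claim.

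The genuine gap is in the strictly positive case. You assert that $\vect{w}\cdot\vect{r}>\vect{w}\cdot\vect{t}_{\uparrow}$ makes $\vect{r}$ a \emph{new vertex} of $\mathscr{C}$, and you bound the number of such iterations by the number of vertices. But $\vect{r}$ need not be a vertex: Algorithm~\ref{alg:find-supporting-hyperplane} returns the value vector of one particular pure assignment and pure schedulers maximising $\vect{w}\cdot\vect{\rho}$, and when the maximising face of $\mathscr{C}_0$ has positive dimension this point may lie in the relative interior of that face (e.g.\ a pure scheduler whose value vector happens to be a convex combination of two other pure value vectors). The fact you invoke --- that vertices are attained by simple schedulers --- goes in the wrong direction; what you need is that the returned point ranges over a finite set. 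Two repairs are available. The paper's repair: strict inequality together with Lemma~\ref{lem:face} and Lemma~\ref{lem:polytope} shows $\vect{r}$ lies on a face of $\mathscr{C}_0$ containing no previously generated point, and $\mathscr{C}_0$ is a convex polytope with finitely many faces, so this can occur only finitely often. An alternative repair in your own spirit: Algorithm~\ref{alg:find-supporting-hyperplane} only ever outputs points determined by a pure assignment together with pure memoryless schedulers, a finite set; since in the strict case $\vect{w}\cdot\vect{r}>\vect{w}\cdot\vect{t}_{\uparrow}\geq\vect{w}\cdot\vect{r}'$ for every earlier $\vect{r}'$, each such iteration consumes a new element of that finite set. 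With either repair the rest of your argument goes through unchanged.
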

\begin{corollary}\label{cor:}
Suppose $\varepsilon=0$. After Algorithm~\ref{alg:scheduler-synthesis} terminates, the following properties hold: (i) $\vect{t}_{\uparrow}=\vect{t}_{\downarrow}$. (ii) $\vect{t}\in \mathscr{C}$ if and only if $\vect{t}_{\downarrow} = \vect{t}$. (iii) If $\vect{t}\notin \mathscr{C}$ then $\vect{t}_{\downarrow}$ is on the Pareto curve of $\mathscr{C}$.
\end{corollary}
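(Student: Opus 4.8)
The plan is to derive all three parts from Theorem~\ref{thm:synthesis}, using the termination guarantee to collapse the two approximations onto each other. First I would observe that with $\varepsilon=0$ the loop guard on Line~\ref{ln:condition0} is $\|\vect{t}_{\downarrow}-\vect{t}_{\uparrow}\|>0$; since Theorem~\ref{thm:synthesis} asserts termination, upon exit the guard fails, so $\|\vect{t}_{\downarrow}-\vect{t}_{\uparrow}\|=0$ and hence $\vect{t}_{\uparrow}=\vect{t}_{\downarrow}$, which is part~(i). Part~(ii) is then immediate in both directions: if $\vect{t}\in\mathscr{C}$ then Theorem~\ref{thm:synthesis}(ii) already gives $\vect{t}_{\downarrow}=\vect{t}$; conversely, if $\vect{t}_{\downarrow}=\vect{t}$ then by~(i) $\vect{t}_{\uparrow}=\vect{t}$, and Theorem~\ref{thm:synthesis}(i) places $\vect{t}_{\uparrow}\in\mathscr{C}$, so $\vect{t}\in\mathscr{C}$.

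For part~(iii), the key first step is to promote the inequality chain of Theorem~\ref{thm:synthesis}(iii) into equalities. Substituting $\vect{t}_{\uparrow}=\vect{t}_{\downarrow}$ from~(i) gives $\|\vect{t}-\vect{t}_{\downarrow}\| \le \min_{\vect{u}\in\mathscr{C}}\|\vect{t}-\vect{u}\| \le \|\vect{t}-\vect{t}_{\uparrow}\| = \|\vect{t}-\vect{t}_{\downarrow}\|$, so every term is equal. Combined with $\vect{t}_{\downarrow}=\vect{t}_{\uparrow}\in\mathscr{C}$, this shows $\vect{t}_{\downarrow}$ is a nearest point of the closed convex set $\mathscr{C}$ to $\vect{t}$ under the $\vect{M}$-inner-product norm. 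I would then invoke uniqueness of the metric projection and its variational characterisation, $\langle\vect{t}-\vect{t}_{\downarrow},\,\vect{u}-\vect{t}_{\downarrow}\rangle\le 0$ for all $\vect{u}\in\mathscr{C}$; writing $\vect{g}=\vect{M}(\vect{t}-\vect{t}_{\downarrow})$ and using symmetry of $\vect{M}$, this reads $\vect{g}\cdot\vect{u}\le\vect{g}\cdot\vect{t}_{\downarrow}$ for all $\vect{u}\in\mathscr{C}$.

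The crucial observation is that $\vect{g}$ is a (scaled) weight vector. Because $\mathscr{C}$ is downward closed, $\vect{t}_{\downarrow}-\lambda\vect{e}_i\in\mathscr{C}$ for every coordinate $i$ and $\lambda\ge 0$; plugging this into the last inequality yields $-\lambda g_i\le 0$, hence $g_i\ge 0$, so $\vect{g}\ge\vect{0}$. Moreover $\vect{g}\ne\vect{0}$, since $\vect{t}\notin\mathscr{C}\ni\vect{t}_{\downarrow}$ forces $\vect{t}\ne\vect{t}_{\downarrow}$ and $\vect{M}$ is nonsingular. Normalising, $\vect{w}=\vect{g}/\|\vect{g}\|_1$ (exactly the direction computed on Line~\ref{ln:find-w-1}) is a genuine weight vector satisfying $\vect{w}\cdot\vect{u}\le\vect{w}\cdot\vect{t}_{\downarrow}$ for all $\vect{u}\in\mathscr{C}$. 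Thus $\{\vect{x}\mid\vect{w}\cdot\vect{x}=\vect{w}\cdot\vect{t}_{\downarrow}\}$ is a supporting hyperplane of $\mathscr{C}$ with nonnegative normal, which by the characterisation in Lemma~\ref{lem:hyperplane} places $\vect{t}_{\downarrow}$ on the Pareto curve of $\mathscr{C}$; for the converse direction I would note that any $\vect{u}^\star\in\mathscr{C}$ dominating $\vect{t}_{\downarrow}$ from above satisfies $\vect{w}\cdot\vect{u}^\star\ge\vect{w}\cdot\vect{t}_{\downarrow}$, so the supporting inequality is tight and no strict improvement along a positively weighted coordinate is possible.

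I expect the main obstacle to be the step that identifies the projection with a Pareto point under the general norm. The naive guess that the residual $\vect{t}-\vect{t}_{\downarrow}$ points in a nonnegative direction is false for a non-identity $\vect{M}$; it is the transformed vector $\vect{g}=\vect{M}(\vect{t}-\vect{t}_{\downarrow})$ that must be shown nonnegative, and this is exactly where downward-closedness of $\mathscr{C}$ is used. A secondary subtlety is the weak-versus-strong reading of ``Pareto optimal'': a nonnegative (rather than strictly positive) normal $\vect{w}$ only certifies that $\vect{t}_{\downarrow}$ is not strictly dominated, so I would align the final step with the paper's supporting-hyperplane definition of the Pareto curve (Lemma~\ref{lem:hyperplane}) rather than asserting strict non-domination directly.
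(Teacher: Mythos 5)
Your proof of parts (i) and (ii) matches the paper's exactly: (i) is the failed loop guard at $\varepsilon=0$, and (ii) combines Theorem~\ref{thm:synthesis}(ii) with $\vect{t}_{\downarrow}=\vect{t}_{\uparrow}\in\mathscr{C}$. For part (iii) you enter the same way as the paper (collapse the inequality chain of Theorem~\ref{thm:synthesis}(iii) into equalities, so that $\vect{t}_{\downarrow}\in\mathscr{C}$ is a nearest point of $\mathscr{C}$ to $\vect{t}$), but where the paper stops there---its entire proof of (iii) is the assertion that Theorem~\ref{thm:synthesis}(iii) ``implies'' Pareto optimality---you carry out the missing geometric step: the variational characterisation of the metric projection, and the use of downward closedness of $\mathscr{C}$ to show that the transformed residual $\vect{g}=\vect{M}(\vect{t}-\vect{t}_{\downarrow})$ is a nonnegative, nonzero normal of a supporting hyperplane at $\vect{t}_{\downarrow}$. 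This is genuinely more than the paper's proof does, and it mirrors the argument the paper uses elsewhere (Lemma~\ref{lem:weight-vector}, proving that Line~\ref{ln:find-w-1} yields a weight vector); your observation that it is $\vect{M}(\vect{t}-\vect{t}_{\downarrow})$, not $\vect{t}-\vect{t}_{\downarrow}$, that must be nonnegative is exactly the right point.

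One caveat, which you partially flag yourself. Lemma~\ref{lem:hyperplane} only states the forward direction (a Pareto-optimal point admits a supporting hyperplane), so invoking it to pass from ``supporting hyperplane with nonnegative normal'' to ``on the Pareto curve'' is not licensed, and that converse is in fact false when $\vect{g}$ has zero components: the projection can then be dominated from above by a point that improves precisely the coordinates where $g_i=0$. Concretely, take $\mathscr{C}=\{(x,y)\mid x\le 0,\ y\le 1\}$ (the downward closure of $\{(0,1)\}$), $\vect{M}$ the identity, and $\vect{t}=(1,0.5)$; the nearest point is $\vect{t}_{\downarrow}=(0,0.5)$, which is dominated from above by $(0,1)\in\mathscr{C}$. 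So your argument---and equally the paper's one-line proof---establishes (iii) only under the weak reading of Pareto optimality (no point of $\mathscr{C}$ improves $\vect{t}_{\downarrow}$ in every coordinate simultaneously); strict non-domination would additionally require $\vect{g}>\vect{0}$. This gap is inherited from the paper rather than introduced by you, and your closing paragraph identifying it is the honest way to state what is actually proved.
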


\begin{algorithm}[t]
\SetKwFor{ParallelForEach}{parallel foreach}{do}{}
\SetKwBlock{DoParallel}{do in parallel}{}
\KwIn{$\{\mathcal{M}_{i\otimes j}\}_{(i,j)\in I\times J}$, $\vect{\rho}=\{\rho_k\}_{k=1}^{|I|+|J|}$, $\vect{w}$}
 \ForEach{$(i,j)\in I\times J$}{
\tcc{Line~\ref{ln:find-scheduler-1} is computed by policy iteration.}
\label{ln:loop-1-start}
$c_{i,j}:= \mathbf{E}^{\mathcal{M}_{i\otimes j}[  \vect{w} \cdot \vect{\rho}],\schr_{i,j}}$ with $\schr_{i,j}:=\argmax_{\schr} \mathbf{E}^{\mathcal{M}_{i\otimes j}[  \vect{w} \cdot \vect{\rho}],\schr}$\;\label{ln:find-scheduler-1}
}
Find an assignment $f\in \mathcal{F}$ maximising $ \sum_{j\in J} c_{f(j),j}$\;\label{ln:find-assignment}
\ForEach{$j\in J$}{
\tcc{Lines~\ref{ln:verification-1}-\ref{ln:verification-2} are computed by value iteration.}
\label{ln:loop-2-start}
$r_{j+|I|}:= \mathbf{E}^{\mathcal{M}_{f(j)\otimes j}[\rho_{j+|I|}],\schr_{f(j),j}}$\;\label{ln:verification-1}
$r_{f(j)}:= \mathbf{E}^{\mathcal{M}_{f(j)\otimes j}[\rho_{f(j)}],\schr_{f(j),j}}$\;\label{ln:verification-2}
}
\textbf{return} $(r_k)_{k=1}^{|I|+|J|}$\;  
\caption{Supporting hyperplane computation in Line~\ref{ln:find-supp-hp} of Alg.~\ref{alg:scheduler-synthesis}\label{alg:find-supporting-hyperplane}}
\end{algorithm}
%

Algorithm~\ref{alg:find-supporting-hyperplane} finds a supporting hyperplane of $\mathscr{C}$ for a given orientation $\vect{w}$. As probabilistic model checking 
is employed in the two inner loops, it is usually an expensive computation.
To see the significance of Algorithm~\ref{alg:find-supporting-hyperplane}, we point out that $\mathscr{C}$ is a convex set defined on the centralised model $\tmdp$ whose size is $O(2^{|I|})$. But instead of dealing with $\tmdp$, Algorithm~\ref{alg:find-supporting-hyperplane} works on a decentralised model consisting of $\{\mathcal{M}_{i\otimes j}\}_{(i,j)\in I\times J}$. 
The first inner loop includes $|I|\times |J|$ (i.e., $|I|^2$) policy-iteration processes to compute optimal schedulers and reachability rewards. The second inner loop uses $2|I|$ value-iteration processes under a fixed scheduler.\footnote{The methods for computing the two inner loops are detailed in the appendix of \cite{long_version}.}
The model selection is computed by using the Hungarian algorithm \cite{kuhn1955hungarian} (Line~\ref{ln:find-assignment}) whose run time is $O(|I|^3)$.
Another important implication of using a decentralised model is the parallel execution of the two inner loops, which we elaborate on in Section~\ref{sec:implementation}. 
Also notice that if $\mathcal{M}_{i \otimes j}= \mathcal{M}_{i' \otimes j'}$ for some $(i,j) \neq (i',j')$, then some models can be skipped in the two inner loops.
In the implementation we should choose some positive $\varepsilon$ for the following three reasons: First, the policy and value iterations for computing the two inner loops are approximate. 
Second, small numerical inaccuracy (e.g., rounding) usually 
occurs in the solving optimisation problems in the algorithm. Third, as the worst-case number of iterations in Algorithm~\ref{alg:scheduler-synthesis} is exponential on the model size and agent number \cite{forejt2012pareto}, a suitable $\varepsilon$ can terminate the algorithm earlier with an approximate threshold vector whose precision is acceptable in practice. 

For synthesis purposes, we can extract a random assignment and a collection of schedulers. 
Assume that the while loop iterates $\ell$ times in total. Let $\{\schr_{i,j}^{\iota}\mid i\in I, j\in J, \}$ and $f_{\iota}$ be generated in Lines~\ref{ln:find-scheduler-1}-\ref{ln:find-assignment} in Algorithm~\ref{alg:find-supporting-hyperplane}, respectively, in the $\iota^\mathrm{th}$ iteration. Let $
v_1 \vect{r}_1 +\ldots +v_\ell \vect{r}_\ell\geq \vect{t}_{\uparrow}$ for some weight vector $\vect{v}$ (this $\vect{v}$ exists since $\vect{t}_{\uparrow}\in{down}(\Phi)$). The convex combination of assignments $v_1 f_1+\ldots +v_\ell f_\ell$ defines a random assignment (i.e., bistochastic matrix).
After an assignment $f_\iota$ is chosen randomly according to probability $v_\iota$, the schedulers for planning are those from  $\{\schr_{i,j}^{\iota}\mid  j\in J, f_{\iota}(j) = i \}$.
\tikzstyle{basic} = [rectangle, minimum width=1.2cm, minimum height=1cm, align=center, draw=black]

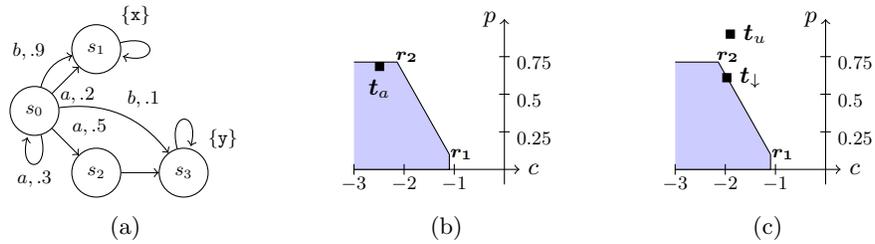
\begin{figure}[t]
    \centering
    \begin{subfigure}[b]{0.3\textwidth}
         \centering
         \begin{tikzpicture}[scale=0.6, every node/.style={scale=0.8}, node distance=0.5cm, auto]
         \node (s_0) [basic, state] {$s_0$};
         \node[label=above right:{$\{\mathtt{x}\}$}] (s_1) [state, above right = of s_0] {$s_1$};
         \node (s_2) [state, below right = of s_0] {$s_2$};
         \node[label=above right:{$\{\mathtt{y}\}$}] (s_3) [state, right = of s_2] { $s_3$};
         \path[->] (s_0) edge [loop below] node {$a,.3$} (s_0)
                   (s_0) edge node [below, xshift=0.2cm] {$a, .2$} (s_1)
                   (s_0) edge node {$a, .5$} (s_2)
                   (s_0) edge [bend left] node {$b, .9$} (s_1)
                   (s_0) edge [bend left] node {$b, .1$} (s_3)
                   (s_2) edge node {} (s_3)
                   (s_3) edge [loop above] (s_3)
                   (s_1) edge [loop right] (s_1);
         \end{tikzpicture}
         \caption{}
         \label{fig:ex_automata}
     \end{subfigure}
     \hfill
     \begin{subfigure}[b]{0.3\textwidth}
         \centering
         \begin{tikzpicture}
             \draw[->, scale=2.] (-1.0, 0) -- (0.1, 0) node[right] {$c$};
             \draw[->, scale=2.] (0, -0.1) -- (0, 1.0) node[left] {$p$};
             \draw[scale=2., fill=blue!20] plot coordinates { (-1, 0.) (-0.366, 0.0) (-0.366, 0.1) (-0.7133, 0.714) (-1., 0.714)};
             \node at (-0.57, 0.2) {\scriptsize$\vect{r_1}$};
             \node at (-1.3, 1.5) {\scriptsize$\vect{r_2}$};
             \foreach \x in {-1, ..., -3} {%
                \draw ($(\x * 2 / 3,0) + (0,-\TickSize)$) -- ($(\x *  2 / 3,0) + (0,\TickSize)$) node[scale=0.8, below, yshift=-0.1cm] {$\x$};
             }
             \foreach \y in {0.25, 0.5, 0.75} {%
                \draw ($(0, \y * 2) + (-\TickSize, 0)$) -- ($(0, \y * 2) + (\TickSize, 0)$) node[scale=0.8, right, yshift=-0.1cm] {$\y$};
             }
             \node[draw, scale=0.5, fill, label=below:{$\vect{t}_a$}] at (-1.66, 1.37) {};
         \end{tikzpicture}
         \caption{}
         \label{fig:ex_achievable}
     \end{subfigure}
     \hfill
     \begin{subfigure}[b]{0.3\textwidth}
         \centering
         \begin{tikzpicture}
             \draw[->, scale=2.] (-1.0, 0) -- (0.1, 0) node[right] {$c$};
             \draw[->, scale=2.] (0, -0.1) -- (0, 1.0) node[left] {$p$};
             \draw[scale=2., fill=blue!20] plot coordinates { (-1, 0.) (-0.366, 0.0) (-0.366, 0.1) (-0.7133, 0.714) (-1., 0.714)};
             \node at (-0.57, 0.2) {\scriptsize$\vect{r_1}$};
             \node at (-1.3, 1.5) {\scriptsize$\vect{r_2}$};
             \node[draw, scale=0.5, fill, label=right:{$\vect{t}_u$}] at (-1.266, 1.8) {};
             \node[draw, scale=0.5, fill, label=right:{$\vect{t}_\downarrow$}] at  (-1.313, 1.22) {};
             \foreach \x in {-1, ..., -3} {%
                \draw ($(\x * 2 / 3,0) + (0,-\TickSize)$) -- ($(\x *  2 / 3,0) + (0,\TickSize)$) node[scale=0.8, below, yshift=-0.1cm] {$\x$};
             }
             \foreach \y in {0.25, 0.5, 0.75} {%
                \draw ($(0, \y * 2) + (-\TickSize, 0)$) -- ($(0, \y * 2) + (\TickSize, 0)$) node[scale=0.8, right, yshift=-0.1cm] {$\y$};
             }
         \end{tikzpicture}
         \caption{}
         \label{fig:ex_unachievable}
     \end{subfigure}
        \caption{Example MOMDP agent and corresponding execution of Algorithm \ref{alg:scheduler-synthesis}.}
        \label{fig:three graphs}
        \vspace{-5mm}
\end{figure}
\parahead{Example} Fig.~\ref{fig:three graphs} is a simple example consisting of one agent and one task to demonstrate an execution of Algorithm \ref{alg:scheduler-synthesis}. Fig.~\ref{fig:ex_automata} shows the agent MDP, where $\rho(s,a) = -1$ for each $a\in A(s)$ and $s \in S$, and the task is $\varphi:=\neg \mathtt{x} \utl \mathtt{y}$. Let $\varepsilon=0.001$.
Fig.~\ref{fig:ex_achievable} shows the computation with a feasible threshold vector $\vect{t}_a = (-2.5, 0.7)$. 
Initially, $\vect{w} = (1, 0)$, which results in $\vect{r}_1=(-1.1, 0.1)$ and the hyperplane $\vect{w}\cdot\vect{x} = \vect{w}\cdot\vect{r}_1= -1.1$. Here, $\|\vect{t}_\downarrow - \vect{t}_\uparrow\| = 0.6$ and so another iteration is needed. 
The algorithm finds $\vect{w}=(0.4, 0.6)$ and the corresponding $\vect{r}_2 = (-2.1, 0.71)$. As $\vect{t}_a$ is contained in $down(\{\vect{r}_1, \vect{r}_2\})$, the algorithm terminates. 
Fig.~\ref{fig:ex_unachievable} shows the case with a non-feasible $\vect{t}_u = (-1.8, 0.9)$. Similar to the previous case, the algorithm finds $\vect{w} = (1, 0)$ and $\vect{r}_1$, and then $\vect{w} = (0.4, 0.6)$ and  $\vect{r}_2$.  
As $\vect{w} \cdot \vect{r}_2 < \vect{w} \cdot \vect{t}_u$, 
it finds a new threshold vector $\vect{t}_\downarrow = (-1.97, 0.61)$ in Line~\ref{ln:find-minimum2}. 
Now as $\|\vect{t}_\downarrow - \vect{t}_\uparrow\| < \varepsilon$, the algorithm terminates.
\section{Hybrid GPU-CPU Implementation}
\label{sec:implementation}

In modern systems, GPU and multi-core CPU hardware is readily available. 
We developed an implementation for our MORAP framework, which utilises heterogeneous GPU and multi-core CPU resources to accelerate the computation. The acceleration is based on non-shared data within the two probabilistic model checking loops in Algorithm~\ref{alg:find-supporting-hyperplane}, which takes up the majority of run time for Algorithm~\ref{alg:scheduler-synthesis} in practice.
Parallel execution on GPU and CPU is by allocating models to each available GPU device and CPU core.
For GPU, further (massive) parallelisation can be achieved on the low-level matrix operations for probabilistic model checking.

\parahead{Implementation goal} The main goal of our framework is to maximise throughput and parallelism.
Combination of multiple devices is a load balancing problem in which we can effectively schedule model checking problems to keep all devices optimally busy, and reduce run time. 
We say that computations run on GPU are called \emph{device} operations. 
A multi-core processor can leverage shared memory with negligible latency before computing.
The main concern with parallelism when using a multi-core processor is \emph{thread-blocking} and \emph{context switching} overhead which should be avoided.
Moreover, because low level computations are sequential a processor's execution run time will correspond to the size of a model's state space. 
On the other hand, the major issue with computing on GPU is data transfer between the host and the device. 

\begin{figure}[t]
\centering
\includegraphics[width=0.7\textwidth]{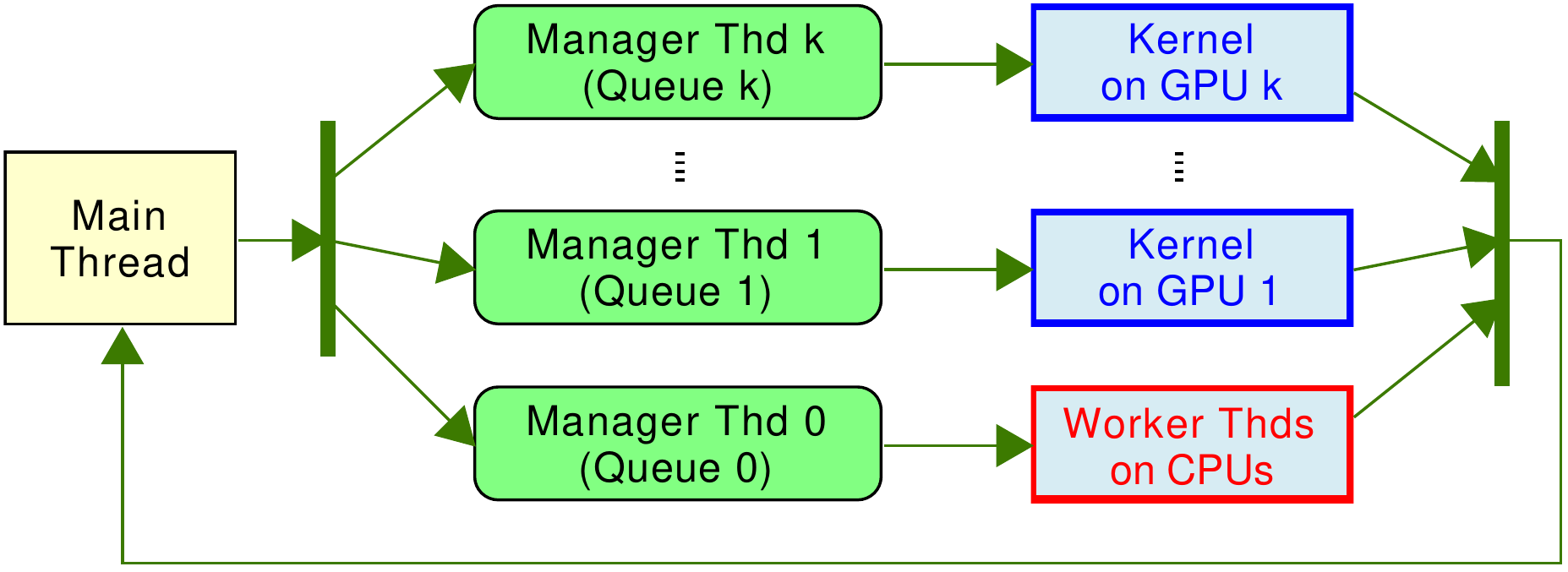}
\caption{Parallel architecture of MORAP framework.} 
\label{fig:arch}
\end{figure}

\renewcommand{\arraystretch}{1.1}
\begin{table}[t]
\vspace{-5mm}
    \caption{Thread roles in MORAP implementation\label{tab:my_label}}
    \begin{center}   
    \begin{tabular}{|@{\hspace{.5em}}l@{\hspace{.5em}}|@{\hspace{.5em}}p{0.7\textwidth}@{\hspace{.5em}}|} \hline
        \textbf{Component} & \textbf{Functionality} \\ \hline
        Main Thread & Loading models to the main memory and running all computation except the two (inner) loops in Alg.~2; generating and allocating models to queues for worker threads and kernels \\ \hline
        Manager Thread & Managing the bounded FIFO queues; (one thread) spawning CPU Worker Threads; (the other threads) calling GPU kernels, incl.\ copying data between the host memory and GPU device memory; communicating with each other via a messaging channel for load balancing  \\ \hline
        Worker Thread & Computing the loops in Alg.~2; each thread bounded on one CPU core and handling one model each time\\ \hline
        Kernel on GPU & Computing the loops in Alg.~2; each kernel running on one GPU device and handling one model each time\\ \hline
    \end{tabular}
    \end{center}
    \vspace{-5mm}
\end{table}
\renewcommand{\arraystretch}{1}

\parahead{Design} Fig.~\ref{fig:arch} shows the parallel architecture of our framework, and Table~\ref{tab:my_label} explains the roles of thread types. 
In particular, there are $k+1$ manager threads controlling $k+1$ FIFO queues of agent-task models $\mdp_{i\otimes j}$, where $k$ is the number of available GPU devices.
One particular manager thread is 
responsible for spawning {worker} threads bound on each available CPU-core, while the others 
call 
kernel functions on 
the 
GPU.
As each worker thread is dedicated to computing one $\mdp_{i\otimes j}$, response time and {context switching} overhead are minimised. 
Manager threads are not required to be bound to any CPU core as program management is not demanding. 
The computation workload between GPU devices and CPU cores is controlled through a \emph{work stealing}  approach \cite{blumofe1999scheduling}, that is,
if a processor or device is idle and its queue is empty then its manager thread will request (i.e., \emph{steal}) a model from another queue. 
In this way, hardware is optimally loaded with work, and all threads operate asynchronously.

\parahead{Programming and data structure} We implemented our framework with multiple languages including Rust (framework API), CUDA C (GPU device control), and a Python user interface, where the Rust API calls to C, and Python via a foreign function interface (FFI).
Our implementation uses the \emph{affine} property \cite{pierce2004advanced} of the type system in Rust \cite{matsakis2014rust} to ensure that {owned} variables can be used at most once in the application with move-only types. 
This feature is particularly useful in a parallel architecture, as $\mdp_{i \otimes j}$ can be owned by at most one thread at a time, 
and thread computation side-effects are inconsequential to any other thread.
Isolating data access to each $\mathcal{M}_{i \otimes j}$ mitigates the requirement of shared memory access, freeing the framework from data races and data starving. Consequently, the problem is \emph{embarrassingly} parallel. 
Constructing the architecture in this way ensures that our implementation approaches the upper-bound of parallelism. 
Our implementation uses explicit data structures (i.e., sparse matrices) to store the transition probability function and reward structures for each $\mdp_{i\otimes j}$.
Parallel low-level matrix operations on GPU are implemented using the CUDA cuSPARSE API, which guarantees thread-safety.
The reduce operation of state-action values for finding an optimal policy in Line~\ref{ln:find-scheduler-1} is also computed in parallel with one kernel launch.
Optimal occupancy for a GPU kernel is managed through a kernel launcher and a call to CUDA $\mathtt{cudaOccupancyMaxPotentialBlockSize}$.

\section{Experiments}
\label{sec:experiments}

\begin{wrapfigure}{r}{0.45\textwidth}
\vspace{-0.7cm}
\centering
\includegraphics[width=0.9\linewidth]{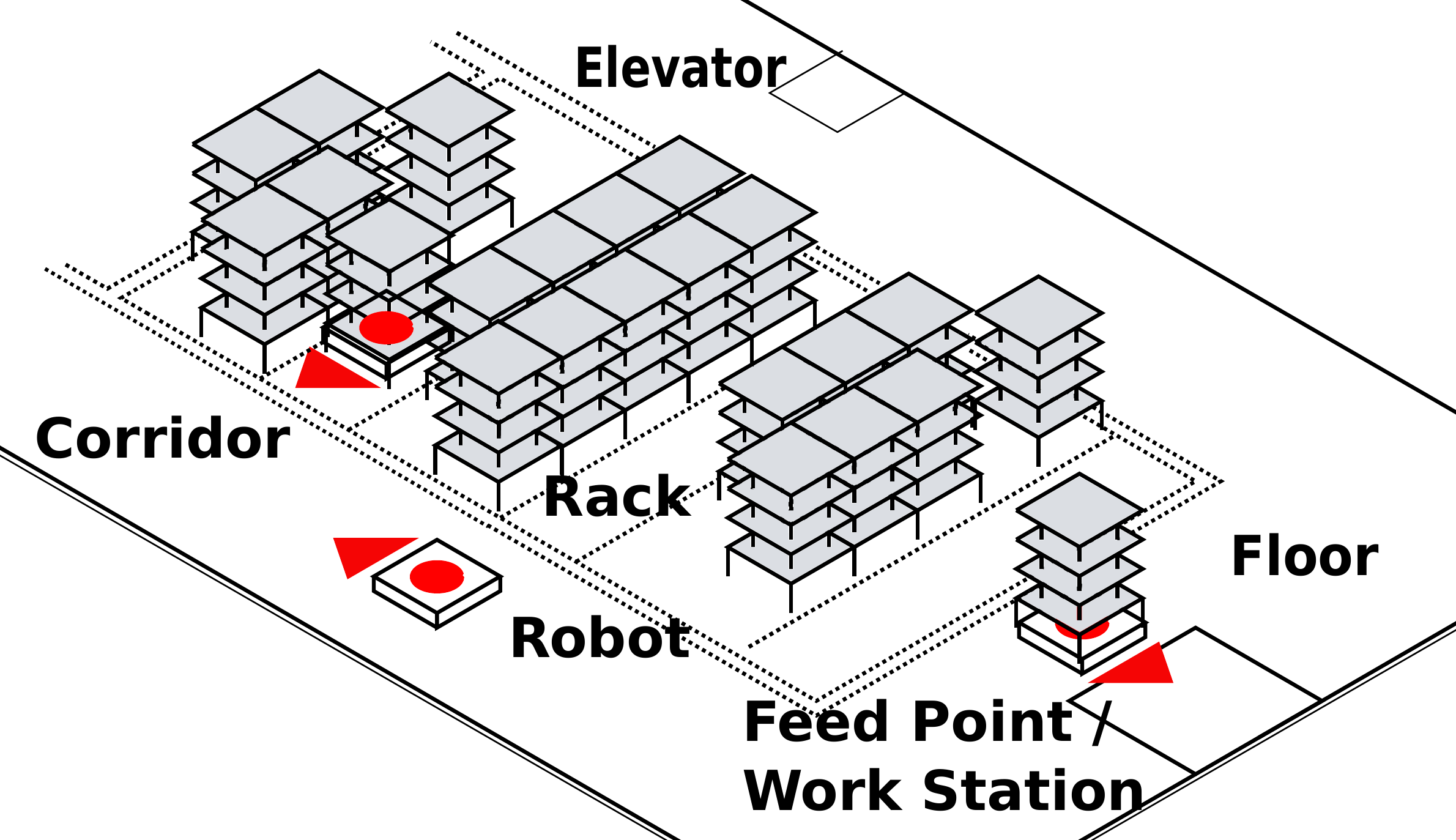}
\caption{A smart warehouse layout}
\vspace{-0.2cm}
\label{fig:warehouse}
\end{wrapfigure}


One realistic example for our MORAP problem is a smart-warehousing or robotic mobile fulfilment system (RMFS), which usually controls tasks centrally with limited communication between robots \cite{wurman2008coordinating}. 
The environment, as depicted in Figure~\ref{fig:warehouse}, is a $W\times H$ two-dimensional grid typically consisting of movable racks (shelves), storage locations, and workstations where order picking and replenishment can take place \cite{merschformann2018rawsim}. 
Robots maneuver in the warehouse to carry out tasks such as order picking and replenishment.

The state of robots is described by the robot position, the internal robot state (e.g., carrying a rack or not) and the environment parameters (e.g., the rack locations) and is discrete. 
Robots can perform such actions as Rotate Left/Right, Go Forward, Load/Unload Rack.
The MORAP problem in this example is (random) assignment $n$ tasks to $n$ robots, and task planning for robots, under the multi-objective requirements of running costs and task fulfilment probabilities.
%
We considered \emph{replenishment tasks} for agents, which are informally described as follows: ``While not carrying anything, go to a rack position in the warehouse, get the rack and carry it to the feed for replenishment, then carry the rack and drop back at a specific rack position.'' Formally, each replenishment task can be specified as a co-safe LTL formula or as a DFA. 
Other tasks such as picking tasks can be specified in a similar way. 

We conducted two experiments to evaluated our MORAP implementation
using Algorithm \ref{alg:scheduler-synthesis}
in our smart warehousing example with different warehouse dimensions $W\times H$ and different agent (task) numbers $n$.
Notice that we evaluated run time per iteration (rather than its end-to-end run time) for our main algorithm (i.e., Algorithm~\ref{alg:scheduler-synthesis}).
\textbf{Experiment 1} included two comparisons 
: First, it compared the model size of the centralised and decentralised models. Second, it compared the run time of hybrid GPU-CPU, (pure) GPU, multi-core CPU and single-core CPU computation.
Note that the hybrid GPU-CPU and multi-core CPU computation is applicable to the decentralised model only.
To benchmark the performance of our implementation with the probabilistic model checking tools Prism and Storm which do not support task assignment problems, \textbf{Experiment 2} compared 
the verification-only average run time for the centralised model for 
our implementation, against Prism and Storm. 
Prism, and Storm work in a similar way to Algorithm  \ref{alg:scheduler-synthesis} by iteratively generating a weight vector $\vect{w}$ and computing a new Pareto point.
In all cases which we had performed, the number of iterations ranged from 2 to 16.


All experiments were conducted on Debian with an AMD 2970WX 24 Core 3.0GHz Processor PCIe 3.0 32Gbps bandwidth, 3070Ti 1.77GHz 8Gb 6144 CUDA Cores GPU, and 32Gb of RAM. An artefact to reproduce the experiments is available online\footnote{https://github.com/tmrob2/hybrid-motap}. 
A single GPU was used and therefore $k=1$ for the number of GPU management threads. 
Prism configuration included using explicit data structures, the Java heap size and hybrid $\mathtt{maxmem}$ were set to 32Gb to avoid memory exceptions.
The default configuration was sufficient for Storm.
The value iteration stopping threshold was set
to $10^{-6}$. Running time cut-off was set to 180 seconds, if the run time exceeds the cut-off time a $\mathtt{timeout}$ error was recorded. If the GPU device runs out of memory, a $\mathtt{memerr}$ was recorded. 
The 
Pareto curve threshold $\varepsilon$ (see line \ref{ln:condition0} in Alg. \ref{alg:scheduler-synthesis})
was set to 0.01.

\begin{table}[t]
\caption{Evaluation of average run time (sec.) 
per iteration in Algorithm~\ref{alg:scheduler-synthesis} for centralised and decentralised models,
where states and transitions refer to reachable states and reachable transitions, respectively.
}
\label{tab:exp1}
\renewcommand{\arraystretch}{1.2}
\centering
\scriptsize
\medskip
\begin{tabular}{|c|c|cc|cccc|cc|cc|}
\hline
\multirow{4}{*}{\shortstack{W.H.\\ size \\$W\times H$}} &
\multirow{4}{*}{\shortstack{agent\\ (task)\\ num.\\ $n$}} &
\multicolumn{6}{c|}{Decentralised} &
\multicolumn{4}{c|}{Centralised} \\
\cline{3-12}
 &
 &
\multicolumn{2}{c|}{Dec. } &
\multicolumn{4}{c|}{Time per iter.} &
\multicolumn{2}{c|}{Cent.} &
\multicolumn{2}{c|}{Time per iter.}
\\
 & &\multicolumn{2}{c|}{Model Size} & Hybrid & Mult.  &GPU & CPU & \multicolumn{2}{c|}{Model Size} & CPU & GPU\\
 & & states & trans. & & CPU & & & states & trans. & & \\
\hline
\multirow{4}{*}{6$\times$6} & 2 & 17k & 104k & 0.016 & \textbf{0.01} & 0.037 & 0.025 &21.2K & 136K & 0.059 & 0.017\\
&5 &106k & 652k & 0.023 & \textbf{0.02} & 0.2 & 0.36 & 3.5M & 22.5M & 6.1 & 2.35 \\
& 6 &152K & 940K & 0.03 & \textbf{0.022} & 0.96 & 0.38 &24.9M & 162M & \scriptsize $\mathtt{timeout}$ & 15.2 \\
& 50 &10.6M & 65.3M & 1.36 & \textbf{1.0} & 27.2 & 11.1 & \scriptsize $\mathtt{memerr}$ & \scriptsize $\mathtt{memerr}$ &  - & - \\ 
& 100 &42.4M & 261M & 4.8 & \textbf{3.9} & 90.8 & 31.9  & - & -  & - & - \\  
\hline
\multirow{8}{*}{12$\times$12}& 2 &254k & 1.5M & 0.18 & 0.14 & 0.13 & \textbf{0.09} & 190k & 1.2M & 1.08 & 1.5 \\
& 4 &1.0M & 6.1M & \textbf{0.36} & 0.38 & 0.33 & 1.78 & 635k & 4.2M & 9.8 & 2.1 \\
& 6 &2.2M & 13.8M & \textbf{0.7} & 0.9 & 0.7 & 4.0 & \scriptsize $\mathtt{memerr}$ & \scriptsize $\mathtt{memerr}$ & - & - \\
& 8 &4.1M & 24.5M & \textbf{1.1} & 1.6 & 1.2 & 7.2 & - & - & - & - \\
& 10 &6.4M & 38.3M & \textbf{1.8} & 4.23 & 2.46 & 11.7 &  & - & - & - \\
& 20 &25.4M & 153M & \textbf{6.5} & 17.3 & 9.8 & 45.7 & - & - & - & - \\ 
& 30 &57.2M & 345M & \textbf{15.3} & 38.8 & 22.1 & $\mathtt{timeout}$ & - & - & - & -\\ 
\hline
\end{tabular}
\newline
\renewcommand{\arraystretch}{1}
\vspace{-5mm}
\end{table}
%

The results for Experiment 1 are included in Table \ref{tab:exp1}.
It can be observed that, in general, the run time performance of the decentralised model is significantly improved over the centralised model.
As expected, the centralised model run time grows exponentially with the increment on the agent and task numbers, while the growth for the decentralised model is linear. 
Table~\ref{tab:exp1} also shows that parallel implementation of some form achieved improved run time performance.  
For a 6$\times$6 warehouse size, multiple CPU achieved almost 10 times improvement over single-CPU. 
For a 12$\times$12 warehouse size, the hybrid GPU-CPU achieved a similar performance increase.
%
When conducting this experiment, we observed that one performance indicator is the ratio of model checking time to model (data) copying time: A higher (resp., lower) ratio implies more (resp., less) effective GPU acceleration. 
This ratio was higher in a 12$\times$12 warehouse than in a 6$\times$6 warehouse, as the former size led larger individual agent-task models than the latter size.
In particular, we observed that a high ratio is important to the hybrid GPU-CPU approach. 
For larger individual agent-task models, the hybrid approach achieved significant improvement over both pure GPU acceleration and multi-core CPU acceleration.

\begin{table}[t]
\centering
\caption{Comparison of verification-only average run time (sec.) per iteration for a centralised model with one agent and one task\medskip}
\label{tab:exp2}
\scriptsize
\renewcommand{\arraystretch}{1.2}
\begin{tabular}{|c|cc|cccc|}
\hline
\multirow{2}{*}{\shortstack{W.H.\ Size\\ $W\times H$}} & 
\multicolumn{2}{c|}{Model Size} &
\multicolumn{4}{c|}{Time per iter.} \\
& states & trans & CPU & GPU & Prism & Storm \\
\hline
3$\times$3 & 334 & 2.17k & \textbf{1e-4} & 0.03 & 0.005 & 0.038\\
6$\times$6 & 4.2k & 18.8k & \textbf{0.004} & 0.038 & 0.025 & 0.058\\
8$\times$8 & 12.9k & 78.5k  & \textbf{0.017} & 0.041 & 0.081 & 0.114\\
10$\times$10 & 30.9k & 187k & 0.048 & \textbf{0.046} & 0.17 & 0.33\\
\hline
\end{tabular}
\vspace{-5mm}
\end{table}

Experiment 2 compared the performance of our implementation and the multi-objective model checking function in Prism and Storm.
This experiment was conducted on a centralised model regarding one agent and one task, which was essentially a standard MOMDP model acceptable by those two tools.
For our implementation, we restricted the MORAP problem to the \emph{verification-only} setting, achieved by replacing Lines~\ref{ln:find-minimum2}-\ref{ln:update-t-tmp}
in Algorithm \ref{alg:scheduler-synthesis}
with a \emph{break} statement to terminate the algorithm.
(Thus, the \emph{break} statement is executed if and only if the verification returns \emph{false}.)
For Prism and Storm, we specified the same problem as an achievabilty query.
The comparison included model checking time only and excluded the model building time.
%
The results, as shown in Table~\ref{tab:exp2}, indicate that our implementation can still achieve competitive performance compared against the existing tools.
It can be seen that, even without utilising the parallelism of the decentralised model, our implementation is an efficient framework for multi-objective probabilistic model checking.

\section{Related Work}
\label{sec:related}
Multi-objective optimisation considers the domain of planning where objectives may be conflicting, and Pareto-optimal solutions are of interest.
These problems are often the focus of multi-objective model checking \cite{roijers2013survey}.
Efficient synthesis of a set of Pareto optimal schedulers maximising expected total rewards for multi-objective model checking are covered in \cite{chatterjee2006markov,delgrange2020simple,etessami2007multi,forejt2012pareto,roijers2014bounded,forejt2011quantitative}. 
While step and reward bounded reachability probabilities are covered in \cite{forejt2012pareto,hartmanns2020multi}.
Recently, a computationally efficient procedure for multi-objective model checking of long-run average and total mixed rewards is presented in \cite{quatmann2021multi}, a generalisation of \cite{forejt2012pareto}. 
Our point-oriented Pareto computation is a new method complementing the existing multi-objective queries in \cite{forejt2012pareto} specifically targeting scalability in multi-agent systems. Different from existing approaches, if a given threshold point is non-feasible, our algorithm computes a Pareto-optimal point which is nearest the given point.

GPU acceleration for MOMDP is studied in \cite{chowdhury2022gpu}, but is problem specific without task verification. 
A parallel GPU accelerated sparse value iteration algorithms are presented in \cite{sapio2018efficient,wu2016gpu}. The implementation in \cite{sapio2018efficient} is similar to ours, particularly value iteration within Line \ref{ln:find-scheduler-1} of Algorithm \ref{alg:find-supporting-hyperplane} including the reduce kernel operation for action comparison,
but does not consider multiple objectives, or task specification. 
The GPU acceleration considered in \cite{wu2016gpu}, requires specific strongly connected component topologies to achieve optimal parallel performance.
In contrast, our parallel implementation takes advantage of multi-agent and task factorisation, and are always present in our problem.

The approach in \cite{faruq_simultaneous_2018} aims to reduce the redundant complexity in the multi-agent MDP \cite{boutilier1996planning} for problems in which agents do not collaborate on tasks, only that an agent optimally completes its allocated tasks. 
We consider the classical random assignment problem for which agents may only work independently on a single task.
The model generated in \cite{faruq_simultaneous_2018} is not suitable for solving our problem as no mechanism exists for tracking which agents have been assigned a particular task. 
Moreover, by decentralising the task allocation model, this work achieves linear scalability with respect to the numbers of agents and tasks.

\section{Conclusion}
\label{sec:conclusion}

In this paper, we presented an approach addressing the problem of simultaneous random task assignment and planning in an MAS under multi-objective constraints.
We demonstrated that our problem is convex and solvable in polynomial time, and that an optimal random assignment and schedulers can be computed in a decentralised way. 
We provided a hybrid GPU-CPU multi-objective model checking framework which optimally manages the computational load on GPU devices and multiple CPU-cores. We conducted two experiments to show that decentralising the problem results in a parallel implementation which can achieve linear scaling and significant run time improvement. Our experiments also demonstrated that the multi-objective model checking performance of our framework is competitive compared with the probabilistic model checkers Prism and Storm. 
Future work consists of further optimisation of the implementation utilising CUDA streams to alleviate the PCI bottleneck for small individual agent-task models. 
We are also interested to extend our MORAP problem to include tasks expressed as $\omega$-regular temporal properties and limiting behaviours
(e.g., mean pay-offs). 

\bibliographystyle{splncs04}
\bibliography{bib1,bib3}

\begin{thebibliography}{10}
\providecommand{\url}[1]{\texttt{#1}}
\providecommand{\urlprefix}{URL }
\providecommand{\doi}[1]{https://doi.org/#1}

\bibitem{baier201910}
Baier, C., Hermanns, H., Katoen, J.P.: The 10,000 facets of mdp model checking.
  In: Computing and Software Science, pp. 420--451. Springer (2019)

\bibitem{baier_principles_2008}
Baier, C., Katoen, J.P.: Principles of {Model} {Checking}. The MIT Press,
  Cambridge, Mass (2008)

\bibitem{birkhoff1946three}
Birkhoff, G.: Three observations on linear algebra. Univ. Nac. Tacuman, Rev.
  Ser. A  \textbf{5},  147--151 (1946)

\bibitem{blumofe1999scheduling}
Blumofe, R.D., Leiserson, C.E.: Scheduling multithreaded computations by work
  stealing. Journal of the ACM (JACM)  \textbf{46}(5),  720--748 (1999)

\bibitem{boutilier1996planning}
Boutilier, C.: Planning, learning and coordination in multiagent decision
  processes. In: Proceedings of the 6th Conference on Theoretical Aspects of
  Rationality and Knowledge. pp. 195--210 (1996)

\bibitem{boyd2004convex}
Boyd, S., Boyd, S.P., Vandenberghe, L.: Convex optimization. Cambridge
  university press (2004)

\bibitem{chatterjee2006markov}
Chatterjee, K., Majumdar, R., Henzinger, T.A.: Markov decision processes with
  multiple objectives. In: Annual Symposium on Theoretical Aspects of Computer
  Science. pp. 325--336. Springer (2006)

\bibitem{chowdhury2022gpu}
Chowdhury, R., Navsalkar, A., Subramani, D.: {GPU}-accelerated multi-objective
  optimal planning in stochastic dynamic environments. Journal of Marine
  Science and Engineering  \textbf{10}(4), ~533 (2022)

\bibitem{delgrange2020simple}
Delgrange, F., Katoen, J.P., Quatmann, T., Randour, M.: Simple strategies in
  multi-objective mdps. In: Tools and Algorithms for the Construction and
  Analysis of Systems: 26th International Conference, TACAS 2020, Held as Part
  of the European Joint Conferences on Theory and Practice of Software, ETAPS
  2020, Dublin, Ireland, April 25--30, 2020, Proceedings, Part I 26. pp.
  346--364. Springer (2020)

\bibitem{etessami2007multi}
Etessami, K., Kwiatkowska, M., Vardi, M.Y., Yannakakis, M.: Multi-objective
  model checking of {Markov} decision processes. In: International Conference
  on Tools and Algorithms for the Construction and Analysis of Systems. pp.
  50--65. Springer (2007)

\bibitem{faruq_simultaneous_2018}
Faruq, F., Parker, D., Laccrda, B., Hawes, N.: Simultaneous {Task} {Allocation}
  and {Planning} {Under} {Uncertainty}. In: 2018 {IEEE}/{RSJ} {International}
  {Conference} on {Intelligent} {Robots} and {Systems} ({IROS}). pp.
  3559--3564. IEEE, Madrid (Oct 2018)

\bibitem{forejt_automatic_2011}
Forejt, V., Kwiatkowska, M., Norman, G., Parker, D.: Automatic {Verification}
  {Techniques} for {Probabilistic} {Systems}. Formal Methods for Eternal
  Networks  \textbf{6659},  60--120 (2011)

\bibitem{forejt2011quantitative}
Forejt, V., Kwiatkowska, M., Norman, G., Parker, D., Qu, H.: Quantitative
  multi-objective verification for probabilistic systems. In: International
  Conference on Tools and Algorithms for the Construction and Analysis of
  Systems. pp. 112--127. Springer (2011)

\bibitem{forejt2012pareto}
Forejt, V., Kwiatkowska, M., Parker, D.: Pareto curves for probabilistic model
  checking. In: International Symposium on Automated Technology for
  Verification and Analysis. pp. 317--332. Springer (2012)

\bibitem{hahn2017multi}
Hahn, E.M., Hashemi, V., Hermanns, H., Lahijanian, M., Turrini, A.:
  Multi-objective robust strategy synthesis for interval markov decision
  processes. In: International Conference on Quantitative Evaluation of
  Systems. pp. 207--223. Springer (2017)

\bibitem{hartmanns2020multi}
Hartmanns, A., Junges, S., Katoen, J.P., Quatmann, T.: Multi-cost bounded
  tradeoff analysis in mdp. Journal of automated reasoning  \textbf{64}(7),
  1483--1522 (2020)

\bibitem{hensel2022probabilistic}
Hensel, C., Junges, S., Katoen, J.P., Quatmann, T., Volk, M.: The probabilistic
  model checker storm. International Journal on Software Tools for Technology
  Transfer  \textbf{24}(4),  589--610 (2022)

\bibitem{kuhn1955hungarian}
Kuhn, H.W.: The hungarian method for the assignment problem. Naval research
  logistics quarterly  \textbf{2}(1-2),  83--97 (1955)

\bibitem{KupfermanOrna2001MCoS}
Kupferman, O., Vardi, M.Y.: Model checking of safety properties. Formal methods
  in system design  \textbf{19}(3),  291--314 (2001)

\bibitem{KNP11}
Kwiatkowska, M., Norman, G., Parker, D.: {PRISM} 4.0: Verification of
  probabilistic real-time systems. In: Gopalakrishnan, G., Qadeer, S. (eds.)
  Proc. 23rd International Conference on Computer Aided Verification (CAV'11).
  LNCS, vol.~6806, pp. 585--591. Springer (2011)

\bibitem{kwiatkowska2022probabilistic}
Kwiatkowska, M., Norman, G., Parker, D.: Probabilistic model checking and
  autonomy. Annual Review of Control, Robotics, and Autonomous Systems
  \textbf{5}(1),  385--410 (may 2022).
  \doi{10.1146/annurev-control-042820-010947}

\bibitem{matsakis2014rust}
Matsakis, N.D., Klock, F.S.: The rust language. ACM SIGAda Ada Letters
  \textbf{34}(3),  103--104 (2014)

\bibitem{merschformann2018rawsim}
Merschformann, M., Xie, L., Li, H.: Rawsim-o: A simulation framework for
  robotic mobile fulfillment systems. Logistics Research  \textbf{11}(1) (2018)

\bibitem{papadimitriou2000approximability}
Papadimitriou, C.H., Yannakakis, M.: On the approximability of trade-offs and
  optimal access of web sources. In: Proceedings 41st Annual Symposium on
  Foundations of Computer Science. pp. 86--92. IEEE (2000)

\bibitem{pierce2004advanced}
Pierce, B.C.: Advanced topics in types and programming languages. MIT press
  (2004)

\bibitem{puterman2014markov}
Puterman, M.L.: Markov decision processes: discrete stochastic dynamic
  programming. John Wiley \& Sons (2014)

\bibitem{quatmann2021multi}
Quatmann, T., Katoen, J.P.: Multi-objective optimization of long-run average
  and total rewards. In: Tools and Algorithms for the Construction and Analysis
  of Systems: 27th International Conference, TACAS 2021, Held as Part of the
  European Joint Conferences on Theory and Practice of Software, ETAPS 2021,
  Luxembourg City, Luxembourg, March 27--April 1, 2021, Proceedings, Part I 27.
  pp. 230--249. Springer (2021)

\bibitem{long_version}
Robinson, T., Su, G.: Multi-objective task assignment and multiagent planning
  with hybrid gpu-cpu acceleration,
  \url{https://github.com/tmrob2/hybrid-motap/blob/master/GPU\_MOTAP\_NFM23\_LONG.pdf}

\bibitem{roijers2014bounded}
Roijers, D., Scharpff, J., Spaan, M., Oliehoek, F., De~Weerdt, M., Whiteson,
  S.: Bounded approximations for linear multi-objective planning under
  uncertainty. In: Proceedings of the International Conference on Automated
  Planning and Scheduling. vol.~24, pp. 262--270 (2014)

\bibitem{roijers2013survey}
Roijers, D.M., Vamplew, P., Whiteson, S., Dazeley, R.: A survey of
  multi-objective sequential decision-making. Journal of Artificial
  Intelligence Research  \textbf{48},  67--113 (2013)

\bibitem{sapio2018efficient}
Sapio, A., Bhattacharyya, S.S., Wolf, M.: Efficient solving of markov decision
  processes on gpus using parallelized sparse matrices. In: 2018 Conference on
  Design and Architectures for Signal and Image Processing (DASIP). pp. 13--18.
  IEEE (2018)

\bibitem{schillinger_simultaneous_2018}
Schillinger, P., Bürger, M., Dimarogonas, D.V.: Simultaneous task allocation
  and planning for temporal logic goals in heterogeneous multi-robot systems.
  The International Journal of Robotics Research  \textbf{37}(7),  818--838
  (2018)

\bibitem{Ulungu1994}
Ulungu, E.L., Teghem, J.: Multi-objective combinatorial optimization problems:
  A survey. Journal of Multi-Criteria Decision Analysis  \textbf{3}(2),
  83--104 (aug 1994). \doi{10.1002/mcda.4020030204}

\bibitem{wu2016gpu}
Wu, Z., Hahn, E.M., G{\"u}nay, A., Zhang, L., Liu, Y.: Gpu-accelerated value
  iteration for the computation of reachability probabilities in mdps. In: ECAI
  2016, pp. 1726--1727. IOS Press (2016)

\bibitem{wurman2008coordinating}
Wurman, P.R., D'Andrea, R., Mountz, M.: Coordinating hundreds of cooperative,
  autonomous vehicles in warehouses. AI magazine  \textbf{29}(1), ~9--9 (2008)

\end{thebibliography}

\newpage
\appendix

\section{Supplementary Materials and Proofs}

\subsection{Formal Definition of Co-Safe LTL}
The semantic relationship, denoted $\sigma\models \varphi$ for any $\sigma\in \Sigma^\omega $ and any LTL formula $\varphi$, is standard. 
For any $\varsigma \in \Sigma^*$ and $\sigma\in \Sigma^\omega $, let $\varsigma\cdot\sigma$ denote the concatenation of $\varsigma$ and $\sigma$.
$\varphi$ is \emph{safe} if the following holds: there is a subset of $\Sigma^*$, denoted $pref_{bad}(\varphi)$, such that if $\varsigma \in pref_{bad}(\varphi)$ then $\varsigma\cdot\sigma \in \Sigma^\omega \backslash \models \varphi$ for all $\sigma\in \Sigma^\omega$;
$\varphi$ is \emph{co-safe} if the following holds: there is a subset of $\Sigma^*$, denoted $pref_{good}(\varphi)$, such that if $\varsigma \in pref_{good}(\varphi)$ then $\varsigma\cdot\sigma \models \varphi$ for all $\sigma \in \Sigma^\omega$.
\begin{lemma}
[\cite{KupfermanOrna2001MCoS}]
\label{prop:co-safe_DFA_LTL_eq}
For any co-safe LTL formula $\varphi$, there is a DFA $\dfa $ whose accepting locations are sinks such that $pref_{good}(\varphi)= acc(\dfa )$.
\end{lemma}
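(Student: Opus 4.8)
The plan is to reduce the claim to a standard determinisation together with a normalisation step that makes the accepting locations absorbing. First I would recall that, by the Vardi--Wolper construction, the $\omega$-language $L(\varphi) = \{\sigma \in \Sigma^\omega \mid \sigma \models \varphi\}$ is $\omega$-regular and hence recognised by a nondeterministic B\"uchi automaton. The co-safety of $\varphi$ then manifests as the identity $L(\varphi) = pref_{good}(\varphi)\cdot\Sigma^\omega$: every accepted word has a good prefix, and every good prefix extended by any infinite word is accepted, which is exactly the defining property of $pref_{good}(\varphi)$.

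The crucial intermediate step is to show that $pref_{good}(\varphi)\subseteq\Sigma^*$ is a \emph{regular} language of finite words. I would argue this via residuals: since $L(\varphi)$ is $\omega$-regular it has only finitely many left-residual languages $\{\sigma\mid\varsigma\cdot\sigma\in L(\varphi)\}$ as $\varsigma$ ranges over $\Sigma^*$, and a finite word $\varsigma$ is a good prefix precisely when its residual equals $\Sigma^\omega$. Grouping finite words by residual therefore induces a right-congruence of finite index on $\Sigma^*$ that saturates $pref_{good}(\varphi)$, so by the Myhill--Nerode theorem $pref_{good}(\varphi)$ is regular. Detecting ``the residual is everything'', i.e.\ that from the current automaton configuration every continuation is accepted, is the technically delicate point, and this is where the informative-prefix (breakpoint) machinery of \cite{KupfermanOrna2001MCoS} is actually needed; I would invoke their analysis here rather than re-derive it.

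Given regularity, I would take any DFA for $pref_{good}(\varphi)$, obtained by the subset construction applied to a nondeterministic automaton for the good prefixes. The key observation is that $pref_{good}(\varphi)$ is upward closed under right extension: if $\varsigma\in pref_{good}(\varphi)$ then $\varsigma\cdot w\in pref_{good}(\varphi)$ for every $w\in\Sigma^*$, because $(\varsigma\cdot w)\cdot\sigma = \varsigma\cdot(w\cdot\sigma)\models\varphi$, the latter holding since $\varsigma$ is a good prefix and $w\cdot\sigma\in\Sigma^\omega$. Consequently, in the DFA every transition out of an accepting location again leads to an accepting location. I would then collapse all accepting locations into a single absorbing state carrying a self-loop on every letter; this leaves the accepted language unchanged and yields a DFA $\dfa$ in which the accepting location is a sink, with $acc(\dfa)=pref_{good}(\varphi)$ as required.

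The main obstacle is the regularity argument and, within it, the constructive detection of good prefixes from the automaton for $L(\varphi)$; the determinisation and the sink-normalisation are routine once regularity is in hand.
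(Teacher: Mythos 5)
Your proposal is correct, but there is nothing in the paper to compare it against: Lemma~\ref{prop:co-safe_DFA_LTL_eq} is stated as an imported result, cited to \cite{KupfermanOrna2001MCoS}, and the paper gives no proof of it in the text or the appendix. Judged on its own merits, your argument is sound: $L(\varphi)$ is $\omega$-regular (Vardi--Wolper); an $\omega$-regular language has finitely many residuals $\varsigma^{-1}L(\varphi)$, since a deterministic $\omega$-automaton for $L(\varphi)$ realises each residual as the language of the state reached on $\varsigma$; a word $\varsigma$ is a good prefix iff $\varsigma^{-1}L(\varphi)=\Sigma^\omega$, so $pref_{good}(\varphi)$ is saturated by a finite-index right congruence and is regular by Myhill--Nerode; extension-closedness then makes every successor of a \emph{reachable} accepting state accepting (discard unreachable states before this step --- the claim can fail for arbitrary DFAs), so merging all accepting states into a single absorbing state yields the required automaton with accepting sink and $acc(\dfa)=pref_{good}(\varphi)$. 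Two refinements are worth noting. First, your deferral to the informative-prefix machinery of \cite{KupfermanOrna2001MCoS} is not needed for correctness: the residual argument already yields regularity abstractly; that machinery is what makes the construction effective and underlies the (doubly exponential) size bounds, so it should be invoked for complexity, not for existence. Second, co-safety is never actually used in your proof --- the good prefixes of an \emph{arbitrary} LTL formula form a regular, extension-closed language, so the lemma as literally stated holds for all formulas; co-safety is what guarantees $L(\varphi)=pref_{good}(\varphi)\cdot\Sigma^\omega$, i.e., that the DFA you build fully characterises $\varphi$, which is the property the paper relies on when it represents tasks as DFAs.
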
 

\subsection{Supplementary Materials for Geometric and Stochastic Matrix}
Recall that a set $C\subseteq \mathbb{R}^m$ for some $m$ is a \emph{convex polytope} if it is a set of all convex combinations of some finite set of vectors. 
A \emph{face} of a convex polytope $C$ is a subset $H\subseteq C$ such that there is a vector $\vect{v}$ such that $\vect{u}\cdot \vect{v}\geq \vect{u}'\cdot \vect{v}$ for all $\vect{u}\in H,\vect{u'}\in C$.
We recall the following property for convex polytopes (which is also an alternative definition of faces).
\begin{lemma}[\cite{forejt2012pareto}] \label{lem:polytope}
Let $C$ be a convex polytope. For any vector $\vect{v}$, there is a face $H$ such that $\vect{u}\cdot \vect{v}=\vect{u}'\cdot \vect{v}$ and $\vect{u}\cdot\vect{v}> \vect{u}''\cdot \vect{v}$ for all $\vect{u},\vect{u'}\in H$ and $\vect{u}''\in C\backslash H$.   
\end{lemma}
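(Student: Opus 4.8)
The plan is to let $H$ be the set of points of $C$ on which the linear functional $\vect{u}\mapsto \vect{u}\cdot\vect{v}$ attains its maximum over $C$, and then read off the three required properties directly from the definition of a face given just above the statement.

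First I would invoke the definition of a convex polytope to write $C$ as the set of convex combinations of a finite set of vectors $\{\vect{x}_1,\ldots,\vect{x}_k\}$. Writing $\phi(\vect{u})=\vect{u}\cdot\vect{v}$, linearity gives $\phi\bigl(\sum_{i=1}^k w_i\vect{x}_i\bigr)=\sum_{i=1}^k w_i\,\phi(\vect{x}_i)\le \max_{1\le i\le k}\phi(\vect{x}_i)$ for every weight vector $\vect{w}$. Hence $M:=\max_{\vect{u}\in C}\phi(\vect{u})$ is finite and is attained at some generator $\vect{x}_i$, so that $H:=\{\vect{u}\in C\mid \vect{u}\cdot\vect{v}=M\}$ is nonempty. (Equivalently, $C$ is compact because it is the convex hull of finitely many points, and a continuous functional attains its maximum on a compact set.)

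Next I would verify the three claims in turn. For the face property, every $\vect{u}\in H$ and every $\vect{u}'\in C$ satisfy $\vect{u}\cdot\vect{v}=M\ge \vect{u}'\cdot\vect{v}$, which is precisely the defining condition of a face with witness vector $\vect{v}$. The equality $\vect{u}\cdot\vect{v}=\vect{u}'\cdot\vect{v}$ for $\vect{u},\vect{u}'\in H$ holds by construction, since both sides equal $M$. Finally, any $\vect{u}''\in C\setminus H$ has $\vect{u}''\cdot\vect{v}\neq M$ by the definition of $H$, and because $M$ is the maximum this forces $\vect{u}''\cdot\vect{v}<M=\vect{u}\cdot\vect{v}$ for every $\vect{u}\in H$, giving the strict inequality.

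There is no deep obstacle here; the argument is essentially immediate once $H$ is chosen correctly. The one point requiring care is that $H$ must be taken to be the \emph{entire} maximiser set $\argmax_{\vect{u}\in C}\vect{u}\cdot\vect{v}$ rather than merely some subset of it: were $H$ a proper subset, then $C\setminus H$ would still contain maximisers and the strict inequality would fail. Choosing $H$ to be all maximisers is exactly what makes the constancy on $H$ and the strict decrease off $H$ hold simultaneously, and compactness of $C$ is what guarantees this set is nonempty.
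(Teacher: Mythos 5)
Your proof is correct and complete. There is, however, no in-paper argument to compare against: the paper states this lemma without proof, recalling it from the cited reference \cite{forejt2012pareto} as a known alternative characterisation of faces. Your argument---taking $H$ to be the \emph{entire} maximiser set of $\vect{u}\mapsto\vect{u}\cdot\vect{v}$ over $C$, which is nonempty because the maximum of a linear functional over a polytope is attained at one of its finitely many generators, then reading off the face property, constancy on $H$, and strict inequality on $C\backslash H$---is the standard proof of this fact, and your closing remark that $H$ must be the full argmax set (any proper subset would leave maximisers in $C\backslash H$ and break the strict inequality) is exactly the right point of care.
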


We also recall the following well-known property for bistochastic matrices.
\begin{lemma}\label{lem:bistochastic}
Given $\vect{U}\in \mathbb{R}^{n\times n}$, the problem of maximising $\sum_{1\leq i,j\leq n} u_{i,j} x_{i,j}$ such that $\vect{X}$ is bistochastic has an optimal solution $\vect{X}^*$ which is a permutation matrix.
\end{lemma}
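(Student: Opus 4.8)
The plan is to derive the statement directly from the Birkhoff–von Neumann theorem (Lemma~\ref{lem:bvn}), exploiting the linearity of the objective over the convex set of bistochastic matrices. First I would establish that an optimal solution exists at all: the set of bistochastic matrices of order $n$ is a closed and bounded, hence compact, subset of $\mathbb{R}^{n\times n}$ (it is cut out by finitely many equalities and inequalities and each entry lies in $[0,1]$), and the objective $\vect{X}\mapsto \sum_{1\leq i,j\leq n} u_{i,j} x_{i,j}$ is continuous, indeed linear. By the extreme value theorem the maximum is therefore attained at some bistochastic matrix, which I call $\vect{X}^*$.

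The core step is then a short convexity argument. By Lemma~\ref{lem:bvn} I would write $\vect{X}^* = \sum_{\ell=1}^k w_\ell \vect{P}_\ell$ as a convex combination of permutation matrices $\vect{P}_1,\ldots,\vect{P}_k$ for some weight vector $\vect{w}$. Since the objective is linear, $\sum_{i,j} u_{i,j} x^*_{i,j} = \sum_{\ell=1}^k w_\ell \big(\sum_{i,j} u_{i,j} (\vect{P}_\ell)_{i,j}\big)$; that is, the optimal value is a convex combination of the objective values attained by the $\vect{P}_\ell$. A convex combination never exceeds its largest term, so choosing $\vect{P}^* \in \argmax_{1\leq \ell\leq k} \sum_{i,j} u_{i,j}(\vect{P}_\ell)_{i,j}$ yields $\sum_{i,j} u_{i,j}(\vect{P}^*)_{i,j} \geq \sum_{i,j} u_{i,j} x^*_{i,j}$. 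Because each $\vect{P}_\ell$ is itself bistochastic while $\vect{X}^*$ is optimal, the reverse inequality also holds, forcing equality. Hence $\vect{P}^*$ is an optimal solution that is a permutation matrix, as required.

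Since the result is essentially immediate once Lemma~\ref{lem:bvn} is in hand, there is no substantial obstacle; the only point deserving care is the existence of a maximizer, which the compactness argument settles cleanly. An equivalent and equally short route would phrase the problem as a linear program over the Birkhoff polytope, whose vertices are exactly the permutation matrices, and invoke the fact that a linear objective over a polytope attains its maximum at a vertex.
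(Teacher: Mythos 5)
Your proof is correct. Note that the paper itself offers no proof of this lemma at all---it is simply ``recalled'' as a well-known property of bistochastic matrices---so there is nothing to compare against; your argument (compactness of the Birkhoff polytope for existence of a maximiser, then the Birkhoff--von Neumann decomposition of Lemma~\ref{lem:bvn} combined with linearity of the objective to replace the maximiser by one of its permutation-matrix components) is the standard proof and uses exactly the tools the paper already has on hand, so it cleanly fills the gap the authors left to the reader.
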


\subsection{Proofs for the Centralised MDP}
We present the complete proof of Theorem~\ref{thm:main} in Section~\ref{sec:convexity}.
\begin{proof}[Theorem~\ref{thm:main}]
Assume that there are a bistochastic matrix $(x_{i,j})_{i\in I, j\in J}$ and $Sch(\mathcal{M}_{i\otimes j})$ for each $i\in I,j\in J$ such that  
the inequalities in Def.~\ref{def:MORAP} are satisfied.
Let $y_{i,j,\sharp} = x_{i,j}/(x_{i,j}+\sum_{i'>i,i'\notin \sharp}x_{i',j})$.
Let $\schr^{\tlab }\in {Sch}(\tmdp )$ such that for all pairs $(i,j)\in I\times J$ and $\sharp\subseteq I$: 
\begin{itemize}
    \item $\schr^{\tlab }(s_{i},q_{j},\sharp)(a) =\schr_{i,j}(s_{i},q_{j})(a)$ for all $a\in A_i(s_{i})$ if $i\in \sharp$ and $q_j\notin Q_{j,F}\cup Q_{j,R}$
    \item $\schr^{\tlab }(s_{i},q_{j},\sharp) (b_3) = 1 $ if $i\in \sharp$ and $j\in Q_{j,F}\cup Q_{j,R}$
    \item $\schr^{\tlab }(s_{i,0},q_{j,0},\sharp) (b_1)= y_{i,j,\sharp} $ and $\schr^{\tlab }(s_{i,0},q_{j,0},\sharp) (b_2)= 1- y_{i,j,\sharp} $ if $i\notin \sharp$
\end{itemize}
$\schr^{\tlab }$ is a well-defined scheduler and satisfies the condition in Theorem~\ref{thm:main}.

Conversely, assume that there is $\schr^{\tlab }\in Sch(\tmdp )$ such that the condition in Theorem~\ref{thm:main} holds.
Let $I^{-i} = \{i'\in I \mid i'\neq i\}$.
For each $(i,j)\in I\times J,\sharp\subseteq I^{-i}$, let $x_{i,j,\sharp} = \schr^{\tlab }(s_{i,0},q_{j,0},\sharp)(b_1)z_{i,j,\sharp}$ where $z_{i,j,\sharp}$ is the probability of reaching the tuple in $(s_{i,0},q_{j,0}, \sharp)$ in $\tmdp$ under $\schr^\tlab$, and let $x_{i,j} = \sum_{\sharp'\subseteq I^{-i}} x_{i,j,\sharp'}$.
The scheduler $\schr_{i,j} \in {Sch}(\mdp_{i\otimes j})$ is defined as follows: $\schr_{i,j}(s,q)(a) = \sum_{\sharp\subseteq I^{-i}} \schr^\tlab(s,q,\sharp)(a)x_{i,j,\sharp}$ for all $(s,q)\in S_i\times Q_j$ and $a\in A_i(s)$.
The matrix $(x_{i,j})_{i\in I,j\in J}$ is bistochastic.
To see this, an (informal) argument for this is as follows: Let $\ipath$ be an arbitrary of $\tmdp$ which starts from its initial state $(s_{1,0},q_{1,0},\emptyset)$.  For each $i\in I$, $\ipath$ traverses $(s_{i,0},q_{j,0}, \sharp)b_1(s_{i,0},q_{j,0}, \sharp\cup\{i\})$ \emph{exactly once} for some $j\in J$.  For each $j\in J$, $\ipath$ traverses $(s_{i,0},q_{j,0}, \sharp)b_1(s_{i,0},q_{j,0}, \sharp\cup\{i\})$ for some $i$ \emph{exactly once} for some $i\in I$.
Thus, the MORAP problem is feasible.
\qed\end{proof}

We present two important properties for $\tmdp$ which are used in the subsequent proofs.
Let $|I|=|J|=n$.
Recall that $\mathscr{C}_0\doteq \{ (\mathbf{E}^{\tmdp[\rho_k],\schr})_{1\leq k\leq 2n}\mid \schr\in {Sch}(\tmdp)\}$ and $\mathscr{C}$ is the downward closure of $\mathscr{C}_0$.
The reward-finiteness assumption implies that $\mathscr{C}_0$ is non-empty and bounded.
The first property is a fundamental property of multi-objective MDPs.
\begin{lemma}[\cite{etessami2007multi}]
$\mathscr{C}_0$ is a convex polytope with a finite number of faces.
\end{lemma}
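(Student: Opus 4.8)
The plan is to establish the claim through the standard occupation-measure (expected-frequency) characterisation of schedulers — the very correspondence that underlies the linear program in Fig.~\ref{fig:lp}. The goal is to exhibit $\mathscr{C}_0$ as the image, under a single fixed linear map, of a convex polyhedron of occupation measures, and then to argue that this image is a \emph{bounded} polyhedron, i.e.\ a convex polytope.

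First I would associate to every scheduler $\schr\in Sch(\tmdp)$ its occupation measure $x_{s,a}$, the expected number of times the state--action pair $(s,a)$ is visited before the run reaches a $\mathtt{done}$-labelled state (treating such states as absorbing, so that reward accumulates exactly as in $\mathbf{E}^{\tmdp[\rho_k],\schr}$). Next I would write down the flow-conservation constraints these frequencies satisfy: for every non-$\mathtt{done}$ state $s$, $\sum_{a} x_{s,a} - \mathrm{I}_{s=s^\tlab_0} = \sum_{s',a'} P^\tlab(s',a',s)\, x_{s',a'}$, together with $x_{s,a}\geq 0$. These are finitely many linear (in)equalities, so the set $\mathcal{X}$ of feasible occupation measures is a convex polyhedron. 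The correspondence runs both ways: the normalisation $\schr(s)(a)=x_{s,a}/\sum_{a'} x_{s,a'}$ (with an arbitrary choice on never-visited states) recovers a memoryless scheduler from any point of $\mathcal{X}$, so $\mathcal{X}$ is precisely the set of occupation measures realisable by $Sch(\tmdp)$. The key point is that each objective is a \emph{linear} functional of the occupation measure, $\mathbf{E}^{\tmdp[\rho_k],\schr} = \sum_{s,a} \rho_k(s,a)\, x_{s,a}$, so the map $\Theta:\mathcal{X}\to\mathbb{R}^{2n}$ collecting all $2n$ objectives is linear and satisfies $\Theta(\mathcal{X})=\mathscr{C}_0$. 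Since the image of a polyhedron under a linear map is again a polyhedron (Fourier--Motzkin projection), $\mathscr{C}_0$ is a polyhedron; and because the vertices of $\mathcal{X}$ correspond to the finitely many simple schedulers in $Sch_{\mathrm{S}}(\tmdp)$, $\mathscr{C}_0$ is in fact the convex hull of their value vectors.

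To upgrade ``polyhedron'' to ``polytope with finitely many faces'' I would invoke boundedness, and here the reward-finiteness assumption on $\tmdp$ is essential: it forces the expected accumulated reward to be finite under every scheduler, which — because in our construction every $\mathtt{done}$ state (accepting sink or trap) absorbs the run — makes the occupation measures finite and $\mathcal{X}$ bounded. A bounded polyhedron is a polytope, and the linear image of a polytope is a polytope, which by definition has finitely many faces (so Lemma~\ref{lem:polytope} applies to it). I expect this boundedness step to be the main obstacle: one must verify that reward-finiteness genuinely excludes unbounded recession directions of $\mathcal{X}$ rather than merely bounding the maximising direction, and, dually, that the scheduler-to-measure correspondence is exact, so that no achievable value vector is lost when passing from $\mathscr{C}_0$ to $\Theta(\mathcal{X})$. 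The remaining assertions are then routine facts about convex polytopes.
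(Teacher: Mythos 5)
The paper never proves this lemma: it is imported wholesale from the cited work \cite{etessami2007multi}, and the main text (Section~\ref{sec:convexity}) only gestures at the underlying technique, namely the occupation-measure linear program of Fig.~\ref{fig:lp}. Your proposal reconstructs exactly that standard argument --- occupation measures constrained by flow conservation, linearity of each $\mathbf{E}^{\tmdp[\rho_k],\schr}$ in the frequencies, and closure of polyhedra under linear images --- so in spirit it coincides with the proof the paper delegates to the literature, and it is consistent with the technique the paper itself uses elsewhere.

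The one step you flag but do not close, boundedness of the occupation-measure polyhedron $\mathcal{X}$, is indeed where the real work lies, and the reason you sketch (``each scheduler's occupation measure is finite'') is not sufficient: a set of finite vectors can still be unbounded. The correct argument goes through the recession cone. A nonzero nonnegative solution of the homogeneous flow system is a circulation; since any flow entering a $\mathtt{done}$-state is lost, such a circulation is supported on non-$\mathtt{done}$ states and is an invariant measure of the Markov chain it induces, hence its support is an end component of $\tmdp$ containing no $\mathtt{done}$-state. After the standard preprocessing that discards states unreachable from $s^\tlab_0$ (which changes neither $\mathscr{C}_0$ nor the correspondence), any such end component is reachable with positive probability under some memoryless scheduler; the scheduler that reaches it and then plays only actions staying inside it never satisfies $\evt\mathtt{done}$, so by the paper's convention $\rho(\ipath|\evt B)=\infty$ on such paths its expected reward is $+\infty$, contradicting reward-finiteness. (Note also that the bookkeeping actions $b_1,b_2,b_3$ strictly advance the assignment, so every end component of $\tmdp$ lies inside a single agent-task block, which is why the paper's reward-finiteness assumption, stated per $\mdp_{i\otimes j}$, suffices for the centralised model.) Hence the recession cone is trivial, $\mathcal{X}$ is a polytope, and the remaining steps of your argument (linear image, vertices at simple schedulers, finitely many faces) are routine. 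The same fact settles your second worry: with no $\mathtt{done}$-free end components in the reachable fragment, every point of $\mathcal{X}$ is the occupation measure of its normalised memoryless scheduler, so passing between $\mathscr{C}_0$ and the image of $\mathcal{X}$ neither loses nor adds achievable value vectors.
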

In the worst case, the number of faces in $\mathscr{C}_0$ is exponential in the size of $\tmdp$ and the number of objectives ($2n$ here).

Let $\rho$ (resp., $\rho_{i,j}$) be a reward structure for $\tmdp$ (resp., $\mdp_{i\otimes j}$). We write $\rho\sim \rho_{i,j}$ if $\rho=\rho_{i,j}$ when restricting $\rho$ to the domain of $\rho_{i,j}$.

\begin{lemma}\label{lem:ct-mdp-prop}
Given any $\rho$ for $\tmdp$ and $\schr\in {Sch}(\tmdp)$, there is a bistochastic matrix $(x_{i,j})_{i\in I, j\in J}$ and $\schr_{i,j} \in {Sch}(\mdp_{i\otimes j})$ such that 
\begin{equation}\label{eq:ct-mdp-prop}
    \mathbf{E}^{\tmdp [\rho],\schr}=\sum_{1\leq i,j\leq n} x_{i,j} \mathbf{E}^{\mdp_{i\otimes j}[\rho_{i,j}],\schr_{i,j}}
\end{equation}
where $\rho\sim \rho_{i,j}$ for all $1\leq i,j\leq n$. 
Moreover, there is a permutation matrix $(x_{i,j})_{i\in I, j\in J}$ such that
\begin{equation}\label{eq:ct-mdp-prop-opt}
    \max_{\schr\in {Sch}(\tmdp)}\mathbf{E}^{\tmdp [\rho],\schr}=\sum_{1\leq i,j\leq n} x_{i,j} \mathbf{E}^{\mdp_{i\otimes j}[\rho_{i,j}],\schr_{i,j}^*}
\end{equation}
where $\schr_{i,j}= \argmax_{\schr'}\mathbf{E}^{\mdp_{i\otimes j}[\rho_{i,j}],\schr'}$ and $\rho\sim \rho_{i,j}$ for all $1\leq i,j\leq n$.
\end{lemma}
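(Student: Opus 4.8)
The plan is to prove the two displayed equalities in turn, reusing the scheduler correspondence already established in the proof of Theorem~\ref{thm:main} together with the occupation-measure (expected-frequency) viewpoint behind the linear program of Figure~\ref{fig:lp}. For the first equality~(\ref{eq:ct-mdp-prop}) I would start from the given $\schr\in Sch(\tmdp)$ and form exactly the bistochastic matrix $(x_{i,j})_{i\in I, j\in J}$ and the induced schedulers $\schr_{i,j}\in Sch(\mdp_{i\otimes j})$ constructed in the converse direction of Theorem~\ref{thm:main}, where $x_{i,j}$ aggregates, over all bookkeeping sets $\sharp$, the probability that task $j$ is assigned to agent $i$. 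The key observation is then that $\mathbf{E}^{\tmdp[\rho],\schr}$ is a linear functional of the expected state-action frequencies (the variables $x_{s,q,a}$ of Figure~\ref{fig:lp}), and that $\rho$ vanishes on the bookkeeping actions $b_1,b_2,b_3$, so reward accrues only on genuine execution steps $a\in A_i(s)$. Grouping these frequencies by the index pair $(i,j)$ of the block $S_i\times Q_j\times A_i$ they belong to, and using the invariant that every path of $\tmdp$ works on each task through exactly one agent (tracked by $\sharp$), the occupation measure of $\tmdp$ factors as the assignment weight $x_{i,j}$ times a valid occupation measure for $\mdp_{i\otimes j}$ under $\schr_{i,j}$. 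Summing $\rho$ against this factorisation yields~(\ref{eq:ct-mdp-prop}).

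For the second equality~(\ref{eq:ct-mdp-prop-opt}) I would show the maximisation decouples. Write $c_{i,j}=\max_{\schr'}\mathbf{E}^{\mdp_{i\otimes j}[\rho_{i,j}],\schr'}$, achieved by $\schr_{i,j}^{*}$. The direction $\leq$ is immediate from~(\ref{eq:ct-mdp-prop}): for any $\schr$, $\mathbf{E}^{\tmdp[\rho],\schr}=\sum_{i,j}x_{i,j}\mathbf{E}^{\mdp_{i\otimes j}[\rho_{i,j}],\schr_{i,j}}\leq \sum_{i,j}x_{i,j}c_{i,j}\leq \max_{(x_{i,j})\text{ bistochastic}}\sum_{i,j}x_{i,j}c_{i,j}$. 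For the direction $\geq$ I would invoke the forward direction of Theorem~\ref{thm:main} to build, from any prescribed bistochastic matrix paired with the optimal per-pair schedulers $\schr_{i,j}^{*}$, a centralised scheduler whose centralised reward equals $\sum_{i,j}x_{i,j}c_{i,j}$ (applying~(\ref{eq:ct-mdp-prop}) to this scheduler recovers the same weights and per-pair values). This is legitimate precisely because conditions C1 and C2 ensure that the execution of one agent-task pair constrains neither the assignment weights nor the execution of the other pairs, so the inner per-pair optima and the outer assignment weights may be chosen independently. Hence $\max_{\schr}\mathbf{E}^{\tmdp[\rho],\schr}=\max_{(x_{i,j})\text{ bistochastic}}\sum_{i,j}x_{i,j}c_{i,j}$, a linear assignment problem over the Birkhoff polytope, and Lemma~\ref{lem:bistochastic} supplies an optimal solution that is a permutation matrix, giving~(\ref{eq:ct-mdp-prop-opt}) with $\schr_{i,j}=\schr_{i,j}^{*}$.

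The step I expect to be the main obstacle is the rigorous factorisation of the occupation measure in the first part: one must verify that restricting the expected state-action frequencies of $\tmdp$ to the block $S_i\times Q_j\times A_i$ and normalising by $x_{i,j}$ yields exactly a valid occupation measure for $\mdp_{i\otimes j}$ under $\schr_{i,j}$, including the correct flow-conservation conditions at the product initial state $(s_{i,0},q_{j,0})$ and at the ended locations $Q_{j,F}\cup Q_{j,R}$. This is where the semantics of $b_1,b_2,b_3$ and the ``each task is handled by exactly one agent'' invariant, used only informally in the proof of Theorem~\ref{thm:main}, must be turned into an exact equality of measures so that the per-segment contribution matches the corresponding agent-task reachability reward. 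Once this factorisation is in place, the linearity of $\rho$ and the decoupling of the maximisation are routine.
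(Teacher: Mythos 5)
Your proposal is correct and takes essentially the same route as the paper: the paper likewise obtains $(x_{i,j})_{i\in I,j\in J}$ and the schedulers $\schr_{i,j}$ from the converse-direction construction in the proof of Theorem~\ref{thm:main}, derives Eq.~\eqref{eq:ct-mdp-prop} by standard expected-total-reward arguments for the induced chains (your occupation-measure factorisation is the detailed version of this step, which the paper delegates to a citation), and then decouples the maximisation and invokes Lemma~\ref{lem:bistochastic} to obtain the permutation matrix in Eq.~\eqref{eq:ct-mdp-prop-opt}. The only difference is rigour, not approach: your explicit two-sided argument for the max equality (using the forward direction of Theorem~\ref{thm:main} to realise any bistochastic matrix paired with the per-pair optimal schedulers) makes precise what the paper compresses into the remark that non-negativity of the $x_{i,j}$ lets each $\mathbf{E}^{\mdp_{i\otimes j}[\rho_{i,j}],\schr_{i,j}}$ be maximised independently.
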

\begin{proof}
First, we follow the second part of the proof of Theorem~\ref{thm:synthesis} to construct a bistochastic matrix  $(x_{i,j})_{i\in I, j\in J}$ and $\schr_{i,j} \in {Sch}(\mdp_{i\otimes j})$ for all $1\leq i,j\leq n$. Eq.~\eqref{eq:ct-mdp-prop} can be derived by the standard probabilistic model checking method for DTMCs and expected total rewards \cite{baier_principles_2008}. 
Moreover, as each $x_{i,j}$ is non-negative, to maximise $\mathbf{E}^{\tmdp [\rho],\schr}$, we need to maximise $\mathbf{E}^{\mdp_{i\otimes j}[\rho_{i,j}],\schr_{i,j}}$ for all $i,j$.
We can fix $\schr_{i,j}^*= \argmax_{\schr'}\mathbf{E}^{\mdp_{i\otimes j}[\rho_{i,j}],\schr'}$ for all $i,j$, by Lemma~\ref{lem:bistochastic} there is a permutation matrix $(x_{i,j})_{i\in I, j\in J}$ maximising $\sum_{1\leq i,j\leq n} x_{i,j} \mathbf{E}^{\mdp_{i\otimes j}[\rho_{i,j}],\schr_{i,j}^*}$.
\qed\end{proof}

\subsection{Proofs for Algorithm~\ref{alg:scheduler-synthesis} and  Algorithm~\ref{alg:find-supporting-hyperplane}}

\begin{lemma}\label{lem:weight-vector}
$\vect{w}$ computed in Line~\ref{ln:find-w-1} in Alg.~\ref{alg:scheduler-synthesis} is a weight vector, i.e., $\vect{w}\geq 0$ and $\|\vect{w}\|_1=1$.
\end{lemma}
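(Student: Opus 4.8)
The plan is to separate the two assertions. The identity $\|\vect{w}\|_1=1$ is immediate: Line~\ref{ln:find-w-1} divides $\vect{M}(\vect{t}-\vect{t}_{\uparrow})$ by its own $1$-norm, so as long as that denominator is nonzero the quotient has unit $1$-norm. Since the $1$-norm is a strictly positive scalar, dividing by it preserves the sign of every coordinate; hence establishing $\vect{w}\geq 0$ reduces to showing that $\vect{M}(\vect{t}-\vect{t}_{\uparrow})\geq 0$ componentwise.

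To obtain this componentwise nonnegativity I would exploit that $\vect{t}_{\uparrow}$, fixed in the preceding lines, is the $\langle\cdot,\cdot\rangle$-projection of $\vect{t}$ onto $down(\Phi)$: Line~\ref{ln:find-minimum1} minimises $\|\vect{t}-\vect{x}\|$ over $\vect{x}\in down(\Phi)$, and this minimiser is unique because the inner product induced by the symmetric positive-definite $\vect{M}$ gives a strictly convex norm while $down(\Phi)$ is a nonempty closed convex set (being the Minkowski sum of the compact polytope $\mathrm{conv}(\Phi)$ and the closed negative orthant). The standard variational characterisation of projection onto a closed convex set then yields $\langle \vect{t}-\vect{t}_{\uparrow},\ \vect{y}-\vect{t}_{\uparrow}\rangle\leq 0$ for every $\vect{y}\in down(\Phi)$.

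The key step is to feed downward-closedness into this inequality. Since $down(\Phi)$ is downward closed, for each coordinate $i$ and each $\lambda>0$ the point $\vect{t}_{\uparrow}-\lambda\vect{e}_i$ (with $\vect{e}_i$ the $i$-th unit vector) again lies in $down(\Phi)$. Substituting $\vect{y}=\vect{t}_{\uparrow}-\lambda\vect{e}_i$ gives $\langle \vect{t}-\vect{t}_{\uparrow},\ -\lambda\vect{e}_i\rangle\leq 0$, i.e.\ $(\vect{t}-\vect{t}_{\uparrow})^T\vect{M}\vect{e}_i\geq 0$; by symmetry of $\vect{M}$ this is exactly the $i$-th coordinate of $\vect{M}(\vect{t}-\vect{t}_{\uparrow})$. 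Letting $i$ range over all coordinates gives $\vect{M}(\vect{t}-\vect{t}_{\uparrow})\geq 0$, and the normalisation then delivers $\vect{w}\geq 0$ together with $\|\vect{w}\|_1=1$.

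The main obstacle I anticipate is not the inequality argument but well-definedness, since the normalisation is legitimate only when $\vect{t}\neq\vect{t}_{\uparrow}$. I would handle the degenerate case by observing that $\vect{t}=\vect{t}_{\uparrow}$ forces $\vect{t}\in down(\Phi)\subseteq\mathscr{C}$ (each $\vect{r}\in\Phi$ lies on a supporting hyperplane of $\mathscr{C}$, hence in $\mathscr{C}$, and $\mathscr{C}$ is downward closed and convex), so by Theorem~\ref{thm:synthesis}(ii) we have $\vect{t}_{\downarrow}=\vect{t}=\vect{t}_{\uparrow}$ and the guard $\|\vect{t}_{\downarrow}-\vect{t}_{\uparrow}\|>\varepsilon$ fails; thus whenever Line~\ref{ln:find-w-1} produces a $\vect{w}$ that feeds a subsequent hyperplane query the denominator is strictly positive. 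The second delicate point is verifying that $down(\Phi)$ is closed, which is precisely why I isolate the Minkowski-sum argument above, as it is what makes the projection and its variational inequality valid.
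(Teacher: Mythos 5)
Your core argument is correct and is essentially the paper's own proof, spelled out in more detail. The paper sets $\vect{w}_{\bot}=\vect{M}(\vect{t}-\vect{t}_{\uparrow})$, asserts that $\{\vect{x}\mid \vect{w}_{\bot}\cdot(\vect{x}-\vect{t}_{\uparrow})=0\}$ separates $\vect{t}$ from ${down}(\Phi)$ because $\|\vect{t}-\vect{t}_{\uparrow}\|$ is the distance from $\vect{t}$ to ${down}(\Phi)$ --- which is precisely the projection variational inequality you invoke --- and then deduces $\vect{w}_{\bot}\geq 0$, with at least one strictly positive entry, from ${down}(\Phi)$ being unbounded from below --- which is precisely your substitution $\vect{y}=\vect{t}_{\uparrow}-\lambda\vect{e}_i$. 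Your explicit verification that ${down}(\Phi)$ is closed (Minkowski sum of a compact polytope and the closed nonpositive orthant) is a detail the paper omits but uses implicitly.

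The one place you genuinely diverge is well-definedness, and there your argument has a flaw that the paper avoids by fiat: the paper simply \emph{assumes} $\vect{t}\neq\vect{t}_{\uparrow}$ and gets $\vect{w}_{\bot}\neq\vect{0}$ from positive definiteness of $\vect{M}$, whereas you try to derive $\vect{t}\neq\vect{t}_{\uparrow}$ from the loop guard. But the guard in Line~\ref{ln:condition0} is evaluated \emph{before} Line~\ref{ln:find-minimum1} recomputes $\vect{t}_{\uparrow}$ from the enlarged $\Phi$. So in the iteration where the most recently added $\vect{r}$ first makes $\vect{t}\in{down}(\Phi)$ (for instance, if $\vect{t}\leq \vect{r}_1$ componentwise, this already happens in the second iteration), the guard passes with the stale value of $\vect{t}_{\uparrow}$, Line~\ref{ln:find-minimum1} then returns $\vect{x}=\vect{t}$, and Line~\ref{ln:find-w-1} executes with a zero denominator \emph{in that same iteration}; the guard only fails at the next check, too late to protect the division. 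Your hedge that the denominator is positive ``whenever Line~\ref{ln:find-w-1} produces a $\vect{w}$ that feeds a subsequent hyperplane query'' is therefore not accurate as stated, since Line~\ref{ln:find-supp-hp} consumes $\vect{w}$ within the same iteration. The honest repair is either to state, as the paper's proof implicitly does, the precondition $\vect{t}\notin{down}(\Phi)$ at the moment Line~\ref{ln:find-w-1} runs, or to note that the algorithm should terminate immediately once Line~\ref{ln:find-minimum1} returns $\vect{t}$ itself (which is justified because then $\vect{t}\in{down}(\Phi)\subseteq\mathscr{C}$ and, by Theorem~\ref{thm:synthesis}, $\vect{t}_{\downarrow}=\vect{t}=\vect{t}_{\uparrow}$).
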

\begin{proof}
Recall that $\|\vect{x}\|^2= \langle\vect{x},\vect{x}\rangle=\vect{x}^T\vect{M}\vect{x}$.
Let $\vect{w}_{\bot}=\vect{M}(\vect{t}-\vect{t}_{\uparrow})$. As $\vect{t}\neq \vect{t}_{\uparrow}$, $\vect{w}_{\bot}\neq \vect{0}$ (otherwise $(\vect{t}-\vect{t}_{\uparrow})^T\vect{M}(\vect{t}-\vect{t}_{\uparrow})=0$ which violating the positive definiteness of $\vect{M}$).
As $\|\vect{t}-\vect{t}_{\uparrow}\|$ is the distance between $\vect{t}$ and ${down}(\Phi)$, the set $\{\vect{x}\mid \vect{w}_{\bot}\cdot (\vect{x}-\vect{t}_{\uparrow})=0\}$ is a separating hyperplane between $\{\vect{t}\}$ and ${down}(\Phi)$, that is, $\vect{w}_{\bot}\cdot (\vect{y}-\vect{t}_{\uparrow})\leq 0 $ for all $\vect{y}\in {down}(\Phi)$.
As ${down}(\Phi)$ is unbounded from below, $\vect{w}_{\bot}\geq 0$ and there is at least one element $w_i$ in $\vect{w}_{\bot}$ such that $w_i>0$. Thus, $\vect{w}= \vect{w}_{\bot}/\|\vect{w}_{\bot}\|_1$ is a weight vector.
\end{proof}

\begin{lemma}\label{lem:face}
Algorithm~\ref{alg:find-supporting-hyperplane} is correct, namely, $\vect{w}\cdot \vect{x} = \vect{w}\cdot \vect{r}$ defines a supporting hyperplane of $\mathscr{C}$ where $\vect{w}$ is computed in Line~\ref{ln:find-w-1} in Alg.~\ref{alg:scheduler-synthesis} and $\vect{r}$ is returned from Alg.~\ref{alg:find-supporting-hyperplane}.
\end{lemma}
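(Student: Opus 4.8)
The plan is to verify the two conditions in the definition of a supporting hyperplane (Lemma~\ref{lem:hyperplane}) for the orientation $\vect{w}$ and the returned point $\vect{r}$: namely, that $\vect{w}\cdot\vect{r}=\max_{\vect{x}\in\mathscr{C}}\vect{w}\cdot\vect{x}$ and that $\vect{r}\in\mathscr{C}$. The whole argument rests on connecting the \emph{decentralised} computation in Algorithm~\ref{alg:find-supporting-hyperplane} (optimal per-pair schedulers in Line~\ref{ln:find-scheduler-1} followed by a Hungarian assignment in Line~\ref{ln:find-assignment}) to the \emph{centralised} optimum over $\tmdp$ via Lemma~\ref{lem:ct-mdp-prop}.

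First I would rewrite $\vect{w}\cdot\vect{r}$ in decomposed form. Since $f$ is a bijection, each cost index $i\in\{1,\ldots,n\}$ and each probability index $j+|I|$ appears exactly once among the returned components $(r_k)$, so
\[
\vect{w}\cdot\vect{r} = \sum_{j\in J}\bigl(w_{f(j)}\, r_{f(j)} + w_{j+|I|}\, r_{j+|I|}\bigr).
\]
The key observation is that on the agent--task MDP $\mathcal{M}_{f(j)\otimes j}$ only the reward functions $\rho_{f(j)}$ and $\rho_{j+|I|}$ are non-zero, so the restriction of $\vect{w}\cdot\vect{\rho}$ to this MDP equals $w_{f(j)}\rho_{f(j)}+w_{j+|I|}\rho_{j+|I|}$. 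Using linearity of the reachability-reward expectation together with the definitions of $r_{f(j)},r_{j+|I|}$ in Lines~\ref{ln:verification-1}--\ref{ln:verification-2}, each summand equals $\mathbf{E}^{\mathcal{M}_{f(j)\otimes j}[\vect{w}\cdot\vect{\rho}],\schr_{f(j),j}}=c_{f(j),j}$, whence $\vect{w}\cdot\vect{r}=\sum_{j\in J} c_{f(j),j}$.

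Next I would invoke correctness of the Hungarian algorithm to obtain $\vect{w}\cdot\vect{r}=\max_{f'\in\mathcal{F}}\sum_{j\in J}c_{f'(j),j}$, and then apply Lemma~\ref{lem:ct-mdp-prop} (equation~\eqref{eq:ct-mdp-prop-opt}) with $\rho:=\vect{w}\cdot\vect{\rho}$. Because $c_{i,j}=\max_{\schr}\mathbf{E}^{\mathcal{M}_{i\otimes j}[\vect{w}\cdot\vect{\rho}],\schr}$ is exactly the optimal per-pair reward appearing in that lemma, and an optimal permutation matrix corresponds to an assignment, the lemma yields $\max_{\schr\in {Sch}(\tmdp)}\mathbf{E}^{\tmdp[\vect{w}\cdot\vect{\rho}],\schr}=\max_{f'}\sum_j c_{f'(j),j}=\vect{w}\cdot\vect{r}$. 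By linearity $\mathbf{E}^{\tmdp[\vect{w}\cdot\vect{\rho}],\schr}=\vect{w}\cdot(\mathbf{E}^{\tmdp[\rho_k],\schr})_k$, so $\max_{\vect{x}\in\mathscr{C}_0}\vect{w}\cdot\vect{x}=\vect{w}\cdot\vect{r}$; and since $\vect{w}\geq\vect{0}$ is a weight vector (Lemma~\ref{lem:weight-vector}), passing to the downward closure does not increase this maximum, giving $\max_{\vect{x}\in\mathscr{C}}\vect{w}\cdot\vect{x}=\vect{w}\cdot\vect{r}$. For membership $\vect{r}\in\mathscr{C}$, I would exhibit the scheduler of $\tmdp$ that realises assignment $f$ and uses each $\schr_{f(j),j}$ (the construction in the proof of Theorem~\ref{thm:main}); applying the decomposition \eqref{eq:ct-mdp-prop} to each $\rho_k$ separately with that permutation shows its reward vector is precisely $\vect{r}$, so $\vect{r}\in\mathscr{C}_0\subseteq\mathscr{C}$. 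Together these establish the supporting hyperplane.

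The main obstacle will be the bookkeeping in the decomposition step: correctly identifying that all reward functions except $\rho_{f(j)}$ and $\rho_{j+|I|}$ vanish on $\mathcal{M}_{f(j)\otimes j}$, and checking that the per-pair optimal schedulers $\schr_{i,j}$ selected in Line~\ref{ln:find-scheduler-1} are exactly the maximising schedulers required by Lemma~\ref{lem:ct-mdp-prop}. Once the identity $\vect{w}\cdot\vect{r}=\sum_{j\in J}c_{f(j),j}$ is in place, the reduction of the centralised optimum to the Hungarian assignment over optimal per-pair rewards is essentially mechanical, and the downward-closure and membership arguments follow directly from $\vect{w}$ being a weight vector.
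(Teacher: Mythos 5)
Your proposal is correct and follows essentially the same route as the paper's proof: both hinge on Lemma~\ref{lem:ct-mdp-prop} (Eq.~\eqref{eq:ct-mdp-prop-opt}) to equate the centralised optimum $\max_{\schr}\mathbf{E}^{\tmdp[\vect{w}\cdot\vect{\rho}],\schr}$ with an optimal permutation-weighted sum of per-pair optima, use the Hungarian assignment $f$ to realise that optimal permutation, and use Lemma~\ref{lem:weight-vector} to handle the weight vector. The only differences are presentational --- the paper writes one chain of inequalities starting from an arbitrary $\vect{u}\in\mathscr{C}_0$ and ending at $\vect{w}\cdot\vect{r}$, whereas you first establish $\vect{w}\cdot\vect{r}=\sum_{j}c_{f(j),j}$ and argue toward the centralised maximum, and you spell out the membership $\vect{r}\in\mathscr{C}_0$ and the passage to the downward closure that the paper leaves implicit.
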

\begin{proof}
Let $\vect{u}$ be any vector in $\mathscr{C}_0$ and $\vect{u} = (\mathbf{E}^{\tmdp[\rho_k], \schr_0} )_{1\leq k\leq 2n}$ for some $\schr_0\in {Sch}(\tmdp)$.
By Lemma~\ref{lem:weight-vector}, $\vect{w}$ is a weight vector.
Let $\schr_{i,j}^* = \argmax_{\schr_{i,j}}\mathbf{E}^{\tmdp [\vect{w}\cdot \vect{\rho}],\schr_{i,j}}$ for all $1\leq i,j\leq n$, where $\vect{\rho}=\{\rho_i\}_{1\leq i\leq 2n}$.
Then,
\begin{equation*}
\begin{aligned}
   & \vect{w}\cdot\vect{u}  \\
  =\ & \mathbf{E}^{\tmdp [\vect{w}\cdot \vect{\rho}],\schr_0}\\
   \leq\ & \max_{\schr}\mathbf{E}^{\tmdp [\vect{w}\cdot \vect{\rho}],\schr}\\
   =\ & \textstyle\sum_{i,j} x_{i,j}\mathbf{E}^{\mdp_{i\otimes j} [\vect{w}\cdot \vect{\rho}],\schr_{i,j}^*} & \text{(for some permutation\ matrix $(x_{i,j})_{1\leq i,j\leq n}$;} \\
   & &  \text{c.f.\ Eq.~\eqref{eq:ct-mdp-prop-opt} in Lemma~\ref{lem:ct-mdp-prop})} \\
   \leq\ & \textstyle\sum_{i,j} \mathrm{I}_{i=f(j)}\mathbf{E}^{\mdp_{i\otimes j} [\vect{w}\cdot \vect{\rho}],\schr_{i,j}^*} & \text{(according to def.\ of $f$)}\\
   =\ & \vect{w} \cdot \vect{r}
\end{aligned} 
\end{equation*}
The last equality also confirms $\vect{r}\in \mathscr{C}_0\subset  \mathscr{C}$ (i.e., $\vect{r}$ is a feasible threshold).
\qed\end{proof}

We now present the complete proof for Theorem~\ref{thm:synthesis}.

\begin{proof}[Theorem~\ref{thm:synthesis}]
We first show the \emph{termination} of Algorithm~\ref{alg:scheduler-synthesis}. 
Assume that the $\ell^{\mathrm{th}}$ iteration of Algorithm~\ref{alg:scheduler-synthesis} is completed for any $\ell>1$. 
By Lemma~\ref{lem:weight-vector}, $\vect{w}\geq 0 $ is a weight vector.
Informally, the algorithm finds a sequence of values for $\vect{t}_{\uparrow}$ (resp.,  $\vect{t}_{\downarrow}$) which move towards (resp., away from) $\vect{t}$ and terminates eventually with $\|\vect{t}_{\uparrow}-\vect{t}_{\downarrow}\|\leq \varepsilon$. The formal proof relies on the following two claims.
\begin{claim}
    If $\vect{w}\cdot\vect{r}> \vect{w}\cdot\vect{t}_{\uparrow}$, then $\vect{r}$ is on a new face of $\mathscr{C}_0$.
\end{claim}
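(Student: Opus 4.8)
The plan is to derive a contradiction from the separating-hyperplane property of $\vect{w}$ together with the fact that any two points lying on the same maximal face of the polytope $\mathscr{C}_0$ attain the same value under the linear functional $\vect{x}\mapsto\vect{w}\cdot\vect{x}$. Concretely, I would read ``new face'' as: the maximal face $H$ of $\mathscr{C}_0$ exposed by the direction $\vect{w}$ (which exists and is unique in the sense of Lemma~\ref{lem:polytope}) contains none of the points $\vect{r}_1,\ldots,\vect{r}_{\ell-1}$ accumulated in $\Phi$ during the earlier iterations.

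First I would record two facts already available. (i) By Lemma~\ref{lem:face}, $\vect{r}\in\mathscr{C}_0$ and $\vect{w}\cdot\vect{r}=\max_{\vect{x}\in\mathscr{C}_0}\vect{w}\cdot\vect{x}$ (using that $\vect{w}\geq0$, so passing to the downward closure does not raise the maximum); hence $\vect{r}$ lies on the maximal face $H$ of $\mathscr{C}_0$ exposed by $\vect{w}$. (ii) From the proof of Lemma~\ref{lem:weight-vector}, the hyperplane $\{\vect{x}\mid\vect{w}\cdot(\vect{x}-\vect{t}_{\uparrow})=0\}$ separates $\{\vect{t}\}$ from ${down}(\Phi)$, so $\vect{w}\cdot\vect{y}\leq\vect{w}\cdot\vect{t}_{\uparrow}$ for every $\vect{y}\in{down}(\Phi)$; in particular $\vect{w}\cdot\vect{r}_\iota\leq\vect{w}\cdot\vect{t}_{\uparrow}$ for each previously computed $\vect{r}_\iota\in\Phi$ (recall $\vect{t}_{\uparrow}$ was computed from $\Phi$ before $\vect{r}$ is appended in Line~\ref{ln:add-vector}).

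Next I would argue by contradiction: suppose $\vect{r}$ lay on a previously encountered face, i.e.\ some $\vect{r}_\iota$ with $\iota<\ell$ also lies on $H$. Since $H$ is the face of $\mathscr{C}_0$ exposed by $\vect{w}$, Lemma~\ref{lem:polytope} gives that every point of $H$ attains the same (maximal) value of $\vect{w}\cdot(\cdot)$, so $\vect{w}\cdot\vect{r}_\iota=\vect{w}\cdot\vect{r}$. Chaining this with fact (ii) yields $\vect{w}\cdot\vect{r}=\vect{w}\cdot\vect{r}_\iota\leq\vect{w}\cdot\vect{t}_{\uparrow}$, which directly contradicts the hypothesis $\vect{w}\cdot\vect{r}>\vect{w}\cdot\vect{t}_{\uparrow}$. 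Hence no earlier $\vect{r}_\iota$ lies on $H$, i.e.\ $\vect{r}$ sits on a face of $\mathscr{C}_0$ not visited before.

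I expect the only delicate point to be the bookkeeping around the word ``face'': one must use the precise notion from Lemma~\ref{lem:polytope} (the unique maximal face exposed by a direction $\vect{w}$, on which $\vect{w}\cdot(\cdot)$ is constant and maximal), rather than some arbitrary lower-dimensional face, so that ``$\vect{r}_\iota$ on the same face'' legitimately forces $\vect{w}\cdot\vect{r}_\iota=\vect{w}\cdot\vect{r}$. Everything else is a short chain of (in)equalities. This claim then feeds the termination argument, since $\mathscr{C}_0$ has only finitely many faces by the cited result of \cite{etessami2007multi}, so the strict-inequality branch $\vect{w}\cdot\vect{r}>\vect{w}\cdot\vect{t}_{\uparrow}$ can fire only finitely often.
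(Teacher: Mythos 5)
Your proof is correct and takes essentially the same route as the paper's: both use Lemma~\ref{lem:face} to place $\vect{r}$ on the face of $\mathscr{C}_0$ exposed by $\vect{w}$, and the supporting-hyperplane property of $\vect{t}_{\uparrow}$ over ${down}(\Phi)$ (from the proof of Lemma~\ref{lem:weight-vector}) to get $\vect{w}\cdot\vect{r}_\iota\leq\vect{w}\cdot\vect{t}_{\uparrow}<\vect{w}\cdot\vect{r}$ for every earlier point, so that by Lemma~\ref{lem:polytope} no earlier point can lie on that face. Your phrasing as a contradiction, rather than the paper's direct argument, is a purely cosmetic difference.
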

Actually, $\vect{w}\cdot\vect{r}> \vect{w}\cdot\vect{t}_{\uparrow}$ implies that $\vect{w}\cdot\vect{r}> \vect{w}\cdot\vect{u}$ for all $\vect{u}\in {down}(\Phi\backslash\{\vect{r}\})$ (since $\{\vect{u}\in \mathbb{R}^{2n} \mid \vect{w}\cdot\vect{u}=\vect{w}\cdot\vect{t}_{\uparrow}\}$ is a supporting hyperplane for ${down}(\Phi\backslash\{\vect{r}\})$). Thus, under this condition, by Lemma~\ref{lem:face} and Lemma~\ref{lem:polytope}, there is a face $H$ of $\mathscr{C}_0$ such that $\vect{r}\in H$ and $\vect{r}'\notin H$ for all $\vect{r'}\in \Phi\backslash\{\vect{r}\}$; in other words, $\vect{r}$ is on a new face of $\mathscr{C}_0$.

\begin{claim}
    If $\vect{w}\cdot\vect{r}\leq  \vect{w}\cdot\vect{t}_{\uparrow}$, then $\vect{t}_{\downarrow} =\vect{t}_{\uparrow}$.
\end{claim}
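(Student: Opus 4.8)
The plan is to first pin down the value of $\vect{w}\cdot\vect{r}$. Every $\vect{r}'$ added to $\Phi$ lies in $\mathscr{C}$ (Lemma~\ref{lem:face}), and since $\mathscr{C}$ is convex and downward closed we get ${down}(\Phi)\subseteq\mathscr{C}$; in particular $\vect{t}_{\uparrow}\in{down}(\Phi)\subseteq\mathscr{C}$. Because the hyperplane produced in Line~\ref{ln:find-supp-hp} supports $\mathscr{C}$ in direction $\vect{w}$, Lemma~\ref{lem:hyperplane} gives $\vect{w}\cdot\vect{x}\leq\vect{w}\cdot\vect{r}$ for all $\vect{x}\in\mathscr{C}$, hence $\vect{w}\cdot\vect{t}_{\uparrow}\leq\vect{w}\cdot\vect{r}$. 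Combined with the hypothesis $\vect{w}\cdot\vect{r}\leq\vect{w}\cdot\vect{t}_{\uparrow}$, this forces $\vect{w}\cdot\vect{r}=\vect{w}\cdot\vect{t}_{\uparrow}$, so the half-space added to $\Lambda$ in Line~\ref{ln:add-vector}, namely $\vect{w}\cdot\vect{z}\leq\vect{w}\cdot\vect{r}$, coincides with $\vect{w}\cdot\vect{z}\leq\vect{w}\cdot\vect{t}_{\uparrow}$.

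The heart of the argument is then to show that $\vect{t}_{\uparrow}$ is the unique $\vect{M}$-closest point to $\vect{t}$ in the upper approximation $U:=\{\vect{z}\mid\vect{w}'\cdot\vect{z}\leq\vect{w}'\cdot\vect{r}'\text{ for all }(\vect{w}',\vect{r}')\in\Lambda\}$. Writing $\vect{w}=\vect{M}(\vect{t}-\vect{t}_{\uparrow})/\alpha$ with $\alpha=\|\vect{M}(\vect{t}-\vect{t}_{\uparrow})\|_1>0$ as in Line~\ref{ln:find-w-1} (we may assume $\vect{t}\neq\vect{t}_{\uparrow}$, otherwise $\vect{t}_{\downarrow}=\vect{t}=\vect{t}_{\uparrow}$ trivially), I would compute the $\vect{M}$-orthogonal projection of $\vect{t}$ onto the bounding hyperplane $\{\vect{z}\mid\vect{w}\cdot\vect{z}=\vect{w}\cdot\vect{t}_{\uparrow}\}$. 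Using the identities $\vect{M}^{-1}\vect{w}=(\vect{t}-\vect{t}_{\uparrow})/\alpha$ and $\vect{w}\cdot(\vect{t}-\vect{t}_{\uparrow})=\tfrac{1}{\alpha}\|\vect{t}-\vect{t}_{\uparrow}\|^2$, the standard projection formula $\vect{t}-\frac{\vect{w}\cdot\vect{t}-\vect{w}\cdot\vect{t}_{\uparrow}}{\vect{w}^{T}\vect{M}^{-1}\vect{w}}\vect{M}^{-1}\vect{w}$ collapses to $\vect{t}-(\vect{t}-\vect{t}_{\uparrow})=\vect{t}_{\uparrow}$. Thus $\vect{t}_{\uparrow}$ is exactly the projection of $\vect{t}$ onto the single half-space $\vect{w}\cdot\vect{z}\leq\vect{w}\cdot\vect{t}_{\uparrow}$, so the minimum of $\|\vect{t}-\vect{z}\|$ over that half-space equals $\|\vect{t}-\vect{t}_{\uparrow}\|$. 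Since $U$ is contained in this half-space while $\vect{t}_{\uparrow}\in\mathscr{C}\subseteq U$ is feasible, the minimum of $\|\vect{t}-\vect{z}\|$ over $U$ is both bounded below by $\|\vect{t}-\vect{t}_{\uparrow}\|$ and attained at $\vect{t}_{\uparrow}$; strict convexity of the $\vect{M}$-norm makes $\vect{t}_{\uparrow}$ the unique minimizer.

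Finally I would connect this to the program variable $\vect{t}_{\downarrow}$ through the loop invariant $\vect{t}_{\downarrow}=\argmin_{\vect{z}\in U}\|\vect{t}-\vect{z}\|$ for the current $\Lambda$. This holds initially ($\Lambda=\emptyset$, $\vect{t}_{\downarrow}=\vect{t}$) and is preserved across both branches of the test in Line~\ref{ln:condition1}: if the freshly added constraint is violated by the current $\vect{t}_{\downarrow}$ the minimizer is recomputed in Line~\ref{ln:find-minimum2}, while if it is already satisfied the current $\vect{t}_{\downarrow}$ stays feasible for the shrunken region and hence remains optimal. With the invariant established, $\vect{t}_{\downarrow}$ is the closest point of $U$ to $\vect{t}$, which by the previous paragraph equals $\vect{t}_{\uparrow}$, proving the claim.

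The step I expect to be the main obstacle is the middle paragraph: not the projection identity itself, but arguing that the closest point of the intersection of \emph{all} half-spaces in $\Lambda$ — not merely the newly added one — sits at $\vect{t}_{\uparrow}$. The containment $U\subseteq\{\vect{z}\mid\vect{w}\cdot\vect{z}\leq\vect{w}\cdot\vect{t}_{\uparrow}\}$ together with feasibility of $\vect{t}_{\uparrow}$ is what makes the lower and upper bounds meet, and maintaining the invariant for $\vect{t}_{\downarrow}$ in the skipped-update branch relies on the extra observation that a previously optimal point remains optimal when a non-binding constraint is imposed.
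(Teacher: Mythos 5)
Your proof is correct and follows essentially the same route as the paper's: establish $\vect{w}\cdot\vect{r}=\vect{w}\cdot\vect{t}_{\uparrow}$ from the supporting-hyperplane property and $\vect{t}_{\uparrow}\in{down}(\Phi)\subseteq\mathscr{C}$, show that $\vect{t}_{\uparrow}$ is the unique $\vect{M}$-norm-closest point of the half-space $\{\vect{z}\mid\vect{w}\cdot\vect{z}\leq\vect{w}\cdot\vect{t}_{\uparrow}\}$ and hence of the smaller $\Lambda$-defined region containing it, and identify that minimiser with $\vect{t}_{\downarrow}$. The only difference is one of explicitness: you write out the $\vect{M}$-orthogonal projection computation and the loop invariant maintained by $\vect{t}_{\downarrow}$ (including the non-binding-constraint branch), both of which the paper leaves implicit in its chain of $\argmin$ equalities.
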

Actually, as $\vect{t}_{\uparrow}\in{down}(\Phi)$, $\vect{w}'\cdot\vect{r}'\geq \vect{w}'\cdot\vect{t}_{\uparrow}$ for all $(\vect{w}',\vect{r}')\in\Lambda$ by Lemma~\ref{lem:face}. 
Thus, the condition $\vect{w}\cdot\vect{r}\leq  \vect{w}\cdot\vect{t}_{\uparrow}$ equals to $\vect{w}\cdot\vect{r}= \vect{w}\cdot\vect{t}_{\uparrow}$. 
Under this condition, $\vect{t}_{\uparrow}$ and $\vect{t}_{\downarrow}$ are the unique vectors such that
\begin{equation*}
\begin{aligned}
   \vect{t}_{\uparrow} 
   =\ & \argmin_{\vect{z}\in \mathbb{R}^{2n},\vect{w}\cdot\vect{t}_{\uparrow}= \vect{w}\cdot \vect{z} } \|\vect{t}-\vect{z}\|&  \\
   =\ & \argmin_{\vect{z}\in \mathbb{R}^{2n},\vect{w}\cdot\vect{r}\geq \vect{w}\cdot \vect{z} } \|\vect{t}-\vect{z}\| \\
   =\ & \argmin_{\vect{z}\in \mathbb{R}^{2n},\vect{w}'\cdot\vect{r}'\geq \vect{w}'\cdot \vect{z},\forall (\vect{w}',\vect{r}')\in\Lambda } \|\vect{t}-\vect{z}\| \\
   =\ & \vect{t}_{\downarrow}
\end{aligned}
\end{equation*}
The second equality holds because $\{\vect{x}\mid\vect{w}\cdot\vect{r}=\vect{w}\cdot\vect{x}\}$ is a separating hyperplane between $\vect{t}$ and $\{\vect{z}\mid\vect{w}\cdot\vect{r}\geq \vect{w}\cdot\vect{z}\}$.

A direct consequence of the first claim above is that the inequality $\vect{w}\cdot\vect{r}> \vect{w}\cdot\vect{t}_{\uparrow}$ cannot hold for infinitely many iterations. Thus, either the algorithm terminates or $\vect{w}\cdot\vect{r}\leq  \vect{w}\cdot\vect{t}_{\uparrow}$ in some iteration. 
If $\vect{w}\cdot\vect{r}\leq  \vect{w}\cdot\vect{t}_{\uparrow}$, the second claim above guarantees the termination of Algorithm~\ref{alg:scheduler-synthesis} for any $\varepsilon\geq 0$.

\emph{Property (i)} is obvious as $\vect{t}_{\uparrow}\in {down}(\Phi)\subseteq  \mathscr{C} $.
%
\emph{Property (ii)} holds by observing that if $\vect{t}\in \mathscr{C}$ then the condition in Line~\ref{ln:condition1} of Algorithm~\ref{alg:scheduler-synthesis} is always false.
%
For \emph{property (iii)}, the inequality $\min_{\vect{u}\in\mathscr{C}}\|\vect{t}- \vect{u}\| \leq \min_{\vect{u}\in{down}(\Phi)}\|\vect{t}- \vect{u}\| = \|\vect{t}-\vect{t}_{\uparrow}\|$ holds after the first iteration of Algorithm~\ref{alg:scheduler-synthesis}.
On the other hand, $\|\vect{t}-\vect{t}_{\downarrow}\|\leq \|\vect{t}- \vect{u}\|$ for all $\vect{u}\in\mathscr{C}$. 
The inequality holds because any $\vect{u}\in\mathscr{C}$ satisfies the constraints in Line~\ref{ln:find-minimum2}.
Thus $\|\vect{t}-\vect{t}_{\downarrow}\|= \min_{\vect{u}\in\mathscr{C}}\|\vect{t}- \vect{u}\|$.
\qed\end{proof}

\begin{proof}[Corollary~\ref{cor:}]
 Property (i), i.e.\ $\vect{t}_{\uparrow}=\vect{t}_{\downarrow}$, follows immediately from the termination condition. One direction of property (ii) is just property (ii) in Theorem~\ref{thm:synthesis}. For the other direction, suppose  $\vect{t}_{\downarrow}=\vect{t}$. Then, $\vect{t}=\vect{t}_{\uparrow}\in {down}(\Phi)\subseteq \mathscr{C}$.
 For property (iii), if $\vect{t}\notin\mathscr{C}$ then $\vect{t}_{\uparrow}=\vect{t}_{\downarrow}$ and property (iii) in Theorem~\ref{thm:synthesis} implies $\vect{t}_{\downarrow}$ is on the Pareto curve of $\mathscr{C}$.
\end{proof}

\subsection{Policy and Value Iterations in Algorithm~\ref{alg:find-supporting-hyperplane}}
The method for computing Line~\ref{ln:find-scheduler-1}, and that for Lines~\ref{ln:verification-1} and \ref{ln:verification-2} in Alg.\,\ref{alg:find-supporting-hyperplane} are included in Algorithm~\ref{alg:policy-iteration} and \ref{alg:value-iteration}, respectively.
\begin{algorithm}[t]
\KwIn{$\mathcal{M}_{i\otimes j}$, $\vect{w}$, $\vect{\rho}$, $\varepsilon_1>0$\\}
\KwOut{$\schr=\argmax_{\schr}  \mathbf{E}^{\mathcal{M}_{i\otimes j}[\vect{w}\cdot \vect{\rho}],\schr}$, $\mathbf{E}^{\mathcal{M}_{i\otimes j}[\vect{w}\cdot \vect{\rho}],\schr}$\\}
Initialise $\schr$\;
$\vect{x}:=\vect{0}$; $\vect{y}:=\vect{0}$\;
$\mathrm{policy\text{-}stable} := \mathrm{true}$\;
\While{not $\mathrm{policy\text{-}stable}$}{
    \ForEach{$(s,q)\in {S}_{i,j}$}{
        ${y}_{s,q}:= \max_{a\in A_i(s)}
        [(\vect{w}\cdot \vect{\rho})(s,q,a) + \sum_{(s',q')\in {S}_{i,j}} {P}_{i,j}(s,q,a,s',q')\cdot x_{s',q'}]$\label{ln:maxa}\;
        \If{$|y_{s,q}-x_{s,q}|>\varepsilon_1$}{
        $\mathrm{policy\text{-}stable} := \mathrm{false}$\label{ln:policy_update_condition}\;
        $\schr(s,q):=a$ where $a$ is from
        Line~\ref{ln:maxa}\;}
        $x_{s,q}:=y_{s,q}$\;
    }
  }
\label{ln:end_first_block}
\KwRet{$\schr$, $y_{s_{j,0},q_{j,0}}$}
\caption{Computing Line~\ref{ln:find-scheduler-1} in Alg.\,\ref{alg:find-supporting-hyperplane} by policy-iteration \label{alg:policy-iteration}}
\end{algorithm}

\begin{algorithm}[t]
\KwIn{$\mathcal{M}_{i\otimes j}$, $\schr_{i,j}$, $\rho$, $\varepsilon_2>0$\\}
\KwOut{$\mathbf{E}^{\mathcal{M}_{i\otimes j}[\rho],\schr_{i,j}}$\\}
$\vect{x}:=\vect{0}$; $\vect{y}:=\vect{0}$\;
    $\mathrm{value\text{-}stable} :=\mathrm{true}$\;
    \While{not $\mathrm{value\text{-}stable}$}{
      \ForEach{$(s,q)\in {S}_{i,j}$}{
        $y_{s,q}:= {\rho}(s,q,\schr_{i,j}(s,q)) + \sum_{(s',q')\in {S}_{i,j}} {P}_{i,j}(s,q,\schr_{i,j}(s,q),s',q')\cdot x_{s',q'}$\;
        \If{$|y_{s,q} - x_{s,q}|>\varepsilon_2$}{$\mathrm{value\text{-}stable} :=\mathrm{false}$\;}
        $x_{s,q} := y_{s,q}$\;
      }
}
\KwRet{$y_{s_{i,0},q_{j,0}}$}
\caption{Computing Line~\ref{ln:verification-1} and Line~\ref{ln:verification-2} in Alg.\,\ref{alg:find-supporting-hyperplane} by value-iteration\label{alg:value-iteration}}
\end{algorithm}

\end{document}